\newcommand*{\uselmodern}{true}
  \ifdef{\Cref}{%
    \autonum@generatePatchedReferenceCSL{Cref}%
  }{}%
\crefname{apx}{appendix}{appendices}
\Crefname{apx}{Appendix}{Appendices}
\newtheorem*{conjecture*}{Conjecture}
\newtheorem*{definition*}{Definition}
\newtheorem*{proposition*}{Proposition}
\newtheorem*{corollary*}{Corollary}
  \newtheorem*{lemma*}{Lemma}
  \newtheorem*{theorem*}{Theorem}
  \ifdef{\numbertheoremwithin}{
    \newtheorem{theorem}{Theorem}[\numbertheoremwithin]
  }{
    \newtheorem{theorem}{Theorem}
  }
  \newtheorem*{theorem*}{Theorem}
  \newcommand{\mynewtheorem}[2]{
    \ifdef{\seperatenumbering}{
      \newtheorem{#1}{#2}
    }{
      \newtheorem{#1}[theorem]{#2}
    }
  }
\theoremstyle{remark}
\newtheorem{remark}{Remark}
\newtheorem*{remark*}{Remark}
\newtheorem*{example*}{Example}
\crefname{conjecture}{Conjecture}{Conjectures}
\Crefname{conjecture}{Conjecture}{Conjectures}
\crefname{enumi}{part}{parts}
\Crefname{enumi}{Part}{Parts}
\crefname{equation}{Equation}{Equations}
\Crefname{equation}{Equation}{Equations}
\crefname{figure}{Figure}{Figures}
\Crefname{figure}{Figure}{Figures}
\crefname{page}{page}{page}
\Crefname{page}{Page}{Page}
  \crefname{figure}{Figure}{Figures}
  \Crefname{figure}{Figure}{Figures}
\crefname{assumption}{assumption}{assumptions}
\Crefname{assumption}{Assumption}{Assumptions}
\crefname{case}{case}{cases}
\Crefname{case}{Case}{Cases}
\let\originalleft\left
\let\originalright\right
\let\originalmiddle\middle
\renewcommand{\left}{\mathopen{}\mathclose\bgroup\originalleft}
\renewcommand{\right}{\aftergroup\egroup\originalright}
\renewcommand{\middle}{\originalmiddle}
\newcommand{\scref}[2]{{\cref{#2} of~\cref{#1}}}
  \newcommand{\todo}[1][\empty]{%
    \textbf{\textcolor{red}{\ifthenelse{\equal{#1}{\empty}}{TODO}{TODO:~\emph{#1}}}}%
  }%
  \newcounter{todo}
  \newcommand{\todo}[1][\empty]{%
    \bookmarksetupnext{keeplevel}    \subpdfbookmark{TODO}{todo.\arabic{todo}}%
    \bookmarksetup{keeplevel=false}%
    \refstepcounter{todo}%
    \textcolor{red}{\textbf{TODO}%
      \ifthenelse{\equal{#1}{\empty}}{}%
      {\footnote{\textcolor{red}{\emph{#1}}}}%
    }%
    \xspace%
  }
  \newcommand\changemode[1]{%
    \gdef\beamer@currentmode{#1}}
\newcounter{align}
\let\expandafter\oldalign\csname align\endcsname
\def\csname align\endcsname{\refstepcounter{align}\oldalign}
\newcommand{\RL}[2][\empty]{\stackrel{\mathclap{\text{\scriptsize#1}}}{#2}}
\newcommand{\ie}{i.\,e.\@\xspace}
\newcommand{\Ae}{a.\,e.\@\xspace}
\newcommand{\eg}{e.\,g.\@\xspace}
\newcommand{\cf}{cf.\@\xspace}
\newcommand{\iid}{i.i.d.\@\xspace}
\newcommand{\wrt}{w.r.t.\@\xspace}
\newcommand{\eqnl}{\nonumber\\*}
\newcommand{\prm}{^\prime}
\DeclareDocumentCommand{\card}{ O{\empty} m}{
  \ensuremath{
    \ifthenelse{\equal{#1}{big}}{\big\vert #2 \big\vert}{
      \ifthenelse{\equal{#1}{Big}}{\Big\vert #2 \Big\vert}{
        \ifthenelse{\equal{#1}{bigg}}{\bigg\vert #2 \bigg\vert}{
          \ifthenelse{\equal{#1}{Bigg}}{\Bigg\vert #2 \Bigg\vert}{
            \ifthenelse{\equal{#1}{normal}}{\vert #2 \vert}{\left\vert #2 \right\vert}
          }
        }
      }
    }
  }
}
\DeclareDocumentCommand{\cvx}{ O{\empty} m}{
  \ensuremath{
    \ensuremath{\mathrm{conv}
      \ifthenelse{\equal{#1}{big}}{\big(#2\big)}{
        \ifthenelse{\equal{#1}{Big}}{\Big(#2\Big)}{
          \ifthenelse{\equal{#1}{bigg}}{\bigg(#2\bigg)}{
            \ifthenelse{\equal{#1}{Bigg}}{\Bigg(#2\Bigg)}{
              \ifthenelse{\equal{#1}{normal}}{(#2)}{\left(#2\right)}
            }
          }
        }
      }
    }
  }
}
\newcommand{\ol}[1]{\ensuremath{\overline{#1}}}
\DeclareDocumentCommand{\Ntoo}{ O{\empty} m}{
  \ensuremath{
    \ifthenelse{\equal{#1}{big}}{\big[#2\big]}{
      \ifthenelse{\equal{#1}{Big}}{\Big[#2\Big]}{
        \ifthenelse{\equal{#1}{bigg}}{\bigg[#2\bigg]}{
          \ifthenelse{\equal{#1}{Bigg}}{\Bigg[#2\Bigg]}{
            \ifthenelse{\equal{#1}{normal}}{[#2]}{\left[#2\right]}
          }
        }
      }
    }
  }
}
\DeclareDocumentCommand{\NZtoo}{ O{\empty} m}{
  \ensuremath{
    \ifthenelse{\equal{#1}{big}}{\big[0\,{:}\,#2\big]}{
      \ifthenelse{\equal{#1}{Big}}{\Big[0\,{:}\,#2\Big]}{
        \ifthenelse{\equal{#1}{bigg}}{\bigg[0\,{:}\,#2\bigg]}{
          \ifthenelse{\equal{#1}{Bigg}}{\Bigg[0\,{:}\,#2\Bigg]}{
            \ifthenelse{\equal{#1}{normal}}{[0\,{:}\,#2]}{\left[0\,{:}\,#2\right]}
          }
        }
      }
    }
  }
}
\newcommand{\defas}{\vcentcolon=}
\newcommand{\wt}[1]{\ensuremath{\widetilde{#1}}}
\newcommand{\compl}{\ensuremath{^{\mathrm{c}}}}
\newcommand{\NN}{\ensuremath{\mathbb{N}}}
\newcommand{\RR}{\ensuremath{\mathbb{R}}}
\newcommand{\vt}[1]{\ensuremath{\boldsymbol{#1}}}
\newcommand{\wc}{\ensuremath{{}\cdot{}}}
\DeclareDocumentCommand{\ind}{m m}{%
  \ensuremath{%
    \mathds{1}_{#1}%
    \ifthenelse{\equal{#2}{\empty}}{}{(#2)}%
  }%
}
\DeclareDocumentCommand{\Exp}{ O{\empty} O{\empty} m}{
  \ensuremath{\mathds{E}_{#1}
    \ifthenelse{\equal{#2}{big}}{\big[ #3 \big]}{
      \ifthenelse{\equal{#2}{Big}}{\Big[ #3 \Big]}{
        \ifthenelse{\equal{#2}{bigg}}{\bigg[ #3 \bigg]}{
          \ifthenelse{\equal{#2}{Bigg}}{\Bigg[ #3 \Bigg]}{
            \ifthenelse{\equal{#2}{normal}}{[ #3 ]}{\left[ #3 \right]}
          }
        }
      }
    }
  }
}
\DeclareDocumentCommand{\condExp}{ O{\empty} m m}{
  \ensuremath{\mathds{E}
    \ifthenelse{\equal{#1}{big}}{\big[{#2\big| #3}\big]}{
      \ifthenelse{\equal{#1}{Big}}{\Big[{#2\Big| #3}\Big]}{
        \ifthenelse{\equal{#1}{bigg}}{\bigg[{#2\bigg| #3}\bigg]}{
          \ifthenelse{\equal{#1}{Bigg}}{\Bigg[{#2\Bigg| #3}\Bigg]}{
            \ifthenelse{\equal{#1}{normal}}{[{#2| #3}]}{\left[#2 \middle| #3\right]}
          }
        }
      }
    }
  }
}
\newcommand{\p}[2][\empty]{\ensuremath{\mathrm{p}_{#1}\ifthenelse{\equal{#2}{}}{}{\left({#2}\right)}}}
\newcommand{\q}[2][\empty]{\ensuremath{\mathrm{q}_{#1}\ifthenelse{\equal{#2}{}}{}{\left({#2}\right)}}}
\newcommand{\wtp}[2][\empty]{\ensuremath{\wt{\mathrm{p}}_{#1}\ifthenelse{\equal{#2}{}}{}{\left({#2}\right)}}}
\newcommand{\htp}[2][\empty]{\ensuremath{\hat{\mathrm{p}}_{#1}\ifthenelse{\equal{#2}{}}{}{\left({#2}\right)}}}
\newcommand{\pp}[2][\empty]{\ensuremath{\mathrm{p}'_{#1}\ifthenelse{\equal{#2}{}}{}{\left({#2}\right)}}}
\newcommand{\ppp}[2][\empty]{\ensuremath{\mathrm{p}''_{#1}\ifthenelse{\equal{#2}{}}{}{\left({#2}\right)}}}
\DeclareDocumentCommand{\Prob}{ O{\empty} O{\empty} m}{
  \ensuremath{\mathrm{P}\ifthenelse{\equal{#1}{\empty}}{}{_{#1}}
    \ifthenelse{\equal{#2}{big}}{\big\{#3\big\}}{
      \ifthenelse{\equal{#2}{Big}}{\Big\{#3\Big\}}{
        \ifthenelse{\equal{#2}{bigg}}{\bigg\{#3\bigg\}}{
          \ifthenelse{\equal{#2}{Bigg}}{\Bigg\{#3\Bigg\}}{
            \ifthenelse{\equal{#2}{normal}}{\{#3\}}{\left\{#3\right\}}
          }
        }
      }
    }
  }
}
\DeclareDocumentCommand{\DKL}{ O{\empty} m m}{
  \ensuremath{\mathrm{D}
    \ifthenelse{\equal{#1}{big}}{\big(#2 \big\Vert #3\big)}{
      \ifthenelse{\equal{#1}{Big}}{\Big(#2 \Big\Vert #3\Big)}{
        \ifthenelse{\equal{#1}{bigg}}{\bigg(#2 \bigg\Vert #3\bigg)}{
          \ifthenelse{\equal{#1}{Bigg}}{\Bigg(#2 \Bigg\Vert #3\Bigg)}{
            \ifthenelse{\equal{#1}{normal}}{(#2 \Vert #3)}{\left(#2 \middle\Vert #3\right)}
          }
        }
      }
    }
  }
}
\DeclareDocumentCommand{\pcond}{ O{\empty} O{\empty} O{\empty} m m}{
  \ensuremath{\mathrm{p}
    \ifthenelse{\equal{#1#2}{\empty}}{}{_{#1|#2}}
    \ifthenelse{\equal{#4#5}{\empty}}{}{
      \ifthenelse{\equal{#3}{big}}{\big(#4 \big| #5\big)}{
        \ifthenelse{\equal{#3}{Big}}{\Big(#4 \Big| #5\Big)}{
          \ifthenelse{\equal{#3}{bigg}}{\bigg(#4 \bigg| #5\bigg)}{
            \ifthenelse{\equal{#3}{Bigg}}{\Bigg(#4 \Bigg| #5\Bigg)}{
              \ifthenelse{\equal{#3}{normal}}{(#4 | #5)}{\left(#4 \middle| #5\right)}
            }
          }
        }
      }
    }
  }
}
\DeclareDocumentCommand{\Pcond}{ O{\empty} O{\empty} O{\empty} m m}{
  \ensuremath{\mathrm{P}
    \ifthenelse{\equal{#1#2}{\empty}}{}{_{#1|#2}}
    \ifthenelse{\equal{#4#5}{\empty}}{}{
      \ifthenelse{\equal{#3}{big}}{\big\{#4 \big| #5\big\}}{
        \ifthenelse{\equal{#3}{Big}}{\Big\{#4 \Big| #5\Big\}}{
          \ifthenelse{\equal{#3}{bigg}}{\bigg\{#4 \bigg| #5\bigg\}}{
            \ifthenelse{\equal{#3}{Bigg}}{\Bigg\{#4 \Bigg| #5\Bigg\}}{
              \ifthenelse{\equal{#3}{normal}}{\{#4 | #5\}}{\left\{#4 \middle| #5\right\}}
            }
          }
        }
      }
    }
  }
}
\newcommand{\rv}[1]{\ensuremath{\mathsf{\uppercase{#1}}}}
\newcommand{\rvt}[1]{\vt{\rv{#1}}} 
\newcommand{\wrv}[1]{\wt{\rv{#1}}}  
\newcommand{\hrv}[1]{\hat{\rv{#1}}} \newcommand{\hrvt}[1]{\hat{\rvt{#1}}} 
\DeclareDocumentCommand{\mutInf}{ O{\empty} m m}{
  \ensuremath{\mathrm{I}
    \ifthenelse{\equal{#1}{big}}{\big({#2; #3}\big)}{
      \ifthenelse{\equal{#1}{Big}}{\Big({#2; #3}\Big)}{
        \ifthenelse{\equal{#1}{bigg}}{\bigg({#2; #3}\bigg)}{
          \ifthenelse{\equal{#1}{Bigg}}{\Bigg({#2; #3}\Bigg)}{
            \ifthenelse{\equal{#1}{normal}}{({#2; #3})}{\left({#2; #3}\right)}
          }
        }
      }
    }
  }
}
\DeclareDocumentCommand{\condMutInf}{ O{\empty} m m m }{
  \ensuremath{\mathrm{I}
    \ifthenelse{\equal{#1}{big}}{\big(#2; #3 \big| #4\big)}{
      \ifthenelse{\equal{#1}{Big}}{\Big(#2; #3 \Big| #4\Big)}{
        \ifthenelse{\equal{#1}{bigg}}{\bigg(#2; #3 \bigg| #4\bigg)}{
          \ifthenelse{\equal{#1}{Bigg}}{\Bigg(#2; #3 \Bigg| #4\Bigg)}{
            \ifthenelse{\equal{#1}{normal}}{(#2; #3 | #4)}{\left(#2; #3 \middle| #4\right)}
          }
        }
      }
    }
  }
}
\DeclareDocumentCommand{\typ}{ O{\empty} O{\empty} m}{
  \ensuremath{\mathcal{T}^{#1}_{[#3]#2}}
}
\DeclareDocumentCommand{\type}{ O{\empty} m}{
  \ensuremath{\mathcal{T}^{#1}_{#2}}
}
\DeclareDocumentCommand{\condTyp}{ O{\empty} O{\empty} m m m}{
  \ensuremath{\mathcal{T}^{#1}_{[#3|#4]#2}(#5)}
}
\DeclareDocumentCommand{\condType}{ O{\empty} m m m}{
  \ensuremath{\mathcal{T}^{#1}_{#2|#3}(#4)}
}
\DeclareDocumentCommand{\ent}{ O{\empty} m}{
  \ensuremath{\mathrm{H}
    \ifthenelse{\equal{#1}{big}}{\big(#2\big)}{
      \ifthenelse{\equal{#1}{Big}}{\Big(#2\Big)}{
        \ifthenelse{\equal{#1}{bigg}}{\bigg(#2\bigg)}{
          \ifthenelse{\equal{#1}{Bigg}}{\Bigg(#2\Bigg)}{
            \ifthenelse{\equal{#1}{normal}}{(#2)}{\left(#2\right)}
          }
        }
      }
    }
  }
}
\DeclareDocumentCommand{\entPhi}{ O{\empty} m}{
  \ensuremath{\mathrm{H}_{\phi}{}%
    \ifthenelse{\equal{#1}{big}}{\big(#2\big)}{
      \ifthenelse{\equal{#1}{Big}}{\Big(#2\Big)}{
        \ifthenelse{\equal{#1}{bigg}}{\bigg(#2\bigg)}{
          \ifthenelse{\equal{#1}{Bigg}}{\Bigg(#2\Bigg)}{
            \ifthenelse{\equal{#1}{normal}}{(#2)}{\left(#2\right)}
          }
        }
      }
    }
  }
}
\DeclareDocumentCommand{\condEnt}{ O{\empty} m m}{
  \ensuremath{\mathrm{H}
    \ifthenelse{\equal{#1}{big}}{\big({#2\big| #3}\big)}{
      \ifthenelse{\equal{#1}{Big}}{\Big({#2\Big| #3}\Big)}{
        \ifthenelse{\equal{#1}{bigg}}{\bigg({#2\bigg| #3}\bigg)}{
          \ifthenelse{\equal{#1}{Bigg}}{\Bigg({#2\Bigg| #3}\Bigg)}{
            \ifthenelse{\equal{#1}{normal}}{({#2| #3})}{\left(#2 \middle| #3\right)}
          }
        }
      }
    }
  }
}
\DeclareDocumentCommand{\dent}{ O{\empty} m}{
  \ensuremath{\mathrm{h}
    \ifthenelse{\equal{#1}{big}}{\big(#2\big)}{
      \ifthenelse{\equal{#1}{Big}}{\Big(#2\Big)}{
        \ifthenelse{\equal{#1}{bigg}}{\bigg(#2\bigg)}{
          \ifthenelse{\equal{#1}{Bigg}}{\Bigg(#2\Bigg)}{
            \ifthenelse{\equal{#1}{normal}}{(#2)}{\left(#2\right)}
          }
        }
      }
    }
  }
}
\DeclareDocumentCommand{\condDent}{ O{\empty} m m}{
  \ensuremath{\mathrm{h}
    \ifthenelse{\equal{#1}{big}}{\big({#2\big| #3}\big)}{
      \ifthenelse{\equal{#1}{Big}}{\Big({#2\Big| #3}\Big)}{
        \ifthenelse{\equal{#1}{bigg}}{\bigg({#2\bigg| #3}\bigg)}{
          \ifthenelse{\equal{#1}{Bigg}}{\Bigg({#2\Bigg| #3}\Bigg)}{
            \ifthenelse{\equal{#1}{normal}}{({#2| #3})}{\left(#2 \middle| #3\right)}
          }
        }
      }
    }
  }
}
\newcommand{\binEntOp}{\mathrm{H}}
\DeclareDocumentCommand{\binEnt}{ O{\empty} m}{
  \ensuremath{\binEntOp{
      \ifthenelse{\equal{#2}{\empty}}{}{
        \ifthenelse{\equal{#1}{big}}{\big({#2}\big)}{
          \ifthenelse{\equal{#1}{Big}}{\Big({#2}\Big)}{
            \ifthenelse{\equal{#1}{bigg}}{\bigg({#2}\bigg)}{
              \ifthenelse{\equal{#1}{Bigg}}{\Bigg({#2}\Bigg)}{
                \ifthenelse{\equal{#1}{normal}}{({#2})}{\left(#2\right)}
              }
            }
          }
        }
      }
    }
  }
}
\DeclareDocumentCommand{\binEntInv}{ O{\empty} m}{
  \ensuremath{\binEntOp^{-1}{
      \ifthenelse{\equal{#2}{\empty}}{}{
        \ifthenelse{\equal{#1}{big}}{\big({#2}\big)}{
          \ifthenelse{\equal{#1}{Big}}{\Big({#2}\Big)}{
            \ifthenelse{\equal{#1}{bigg}}{\bigg({#2}\bigg)}{
              \ifthenelse{\equal{#1}{Bigg}}{\Bigg({#2}\Bigg)}{
                \ifthenelse{\equal{#1}{normal}}{({#2})}{\left(#2\right)}
              }
            }
          }
        }
      }
    }
  }
}
\def\barcirc{\mathrel{\barcirci}}
\def\barcirci{{%
    \setbox0\hbox{\ensuremath{\relbar\!\!\relbar}}%
    \rlap{\hbox to \wd0{\hss\ensuremath{\circ}\hss}}\box0
}}
\newcommand{\mkv}{\ensuremath{\barcirc}}
\DeclareDocumentCommand{\uniform}{ O{\empty} m}{
  \ensuremath{\mathcal{U}
    \ifthenelse{\equal{#1}{big}}{\big({#2}\big)}{
      \ifthenelse{\equal{#1}{Big}}{\Big({#2}\Big)}{
        \ifthenelse{\equal{#1}{bigg}}{\bigg({#2}\bigg)}{
          \ifthenelse{\equal{#1}{Bigg}}{\Bigg({#2}\Bigg)}{
            \ifthenelse{\equal{#1}{normal}}{({#2})}{\left(#2\right)}
          }
        }
      }
    }
  }
}
\newcommand{\set}[1]{\ensuremath{\mathcal{#1}}}
\newcommand{\AAA}{\ensuremath{\set{A}}}
\newcommand{\BBB}{\ensuremath{\set{B}}}
\newcommand{\III}{\ensuremath{\set{I}}}
\newcommand{\MMM}{\ensuremath{\set{M}}}
\newcommand{\QQQ}{\ensuremath{\set{Q}}}
\newcommand{\RRR}{\ensuremath{\set{R}}}
\newcommand{\SSS}{\ensuremath{\set{S}}}
\newcommand{\eps}{\ensuremath{\varepsilon}}
\newcommand{\dist}{\ensuremath{\mathrm{d}}}
\newcommand{\RRRib}{\ensuremath{\RRR_{\mathrm{IB}}}}
\newcommand{\thetafkt}[1][\empty]{\ensuremath{\hat{\theta}(\ifthenelse{\equal{#1}{\empty}}{\rho,\aii,\bii}{#1})}}
\newcommand{\aii}{\ensuremath{\alpha}}
\newcommand{\bii}{\ensuremath{\beta}}
\newcommand{\CST}[1][\empty]{\ensuremath{C_{\ifthenelse{\equal{#1}{\empty}}{\aii,\bii}{#1}}}}
\definecolor{myred}{RGB}{128, 0, 0}
\definecolor{myblue}{RGB}{0, 0, 128}
\tikzstyle{block} = [draw,fill=RoyalBlue!30,minimum size=2em,rounded corners=2mm]
\tikzstyle{symb}=[]
\def\radius{.8mm} 
\tikzstyle{branch}=[fill,shape=circle,minimum size=3pt,inner sep=0pt]
\tikzstyle{s}=[shift={(0mm,\radius)}]
\newcommand{\includetikz}[1]{%
  \ifdefined\tikzexternalize%
  \filename@parse{#1}%
  \tikzsetnextfilename{\filename@base}%
  \fi%
  \input{#1.tikz}%
}
\newlist{properties}{enumerate}{1}
\setlist[properties]{label=\textnormal{(\roman*)}}
\crefname{property}{Prop.}{Props.}
\Crefname{property}{Prop.}{Props.}
\crefname{definition}{Def.}{Defs.}
\Crefname{definition}{Def.}{Defs.}
\crefname{lemma}{Lem.}{Lems.}
\Crefname{lemma}{Lem.}{Lems.}
\crefname{theorem}{Thm.}{Thms.}
\Crefname{theorem}{Thm.}{Thms.}
\crefname{remark}{Rmk.}{Rmks.}
\Crefname{remark}{Rmk.}{Rmks.}
\crefname{corollary}{Cor.}{Cors.}
\Crefname{corollary}{Cor.}{Cors.}
\crefname{figure}{Fig.}{Figs.}
\Crefname{figure}{Fig.}{Figs.}
\def\nil{\varnothing}
\def\diam{\mathrm{diam}}
\newcommand{\ps}[1][\empty]{%
  \ifthenelse{\equal{#1}{\empty}}{%
    \Omega}{%
    \SSS_{\rv #1}}%
}
\newcommand{\rng}[1][\empty]{%
  \ifthenelse{\equal{#1}{\empty}}{%
    \MMM}{%
    \MMM_{\rv #1}}%
}
\newcommand{\fkt}[1][\empty]{%
  \ifthenelse{\equal{#1}{\empty}}{%
    a}{%
    a_{\rv #1}}%
}
\newcommand{\fktI}[1][\empty]{%
  \ifthenelse{\equal{#1}{\empty}}{%
    b}{%
    b_{\rv #1}}%
}
\newcommand{\fktII}[1][\empty]{%
  \ifthenelse{\equal{#1}{\empty}}{%
    c}{%
    c_{\rv #1}}%
}
\newcommand{\sa}[1][\empty]{%
  \ifthenelse{\equal{#1}{\empty}}{%
    \Sigma}{%
    \AAA_{\rv #1}}%
}
\tikzstyle{rvsketch} = [row sep=10mm,
\begin{document}

\setlength\textfloatsep{2pt plus 3pt}

\title{\vspace*{-.5cm}
Information Bottleneck on General Alphabets}

\author{\IEEEauthorblockN{Georg~Pichler, Günther~Koliander}
  \thanks{G.\ Pichler is with the Institute of Telecommunications, Technische Universität Wien, Vienna, Austria}
  \thanks{G.\ Koliander is with the Acoustics Research Institute, Austrian Academy of Sciences, Vienna, Austria}
  \thanks{\scriptsize Funding by WWTF Grants MA16-053, ICT15-119, and NXT17-013.}
}

\makeatletter
\patchcmd{\@maketitle}
  {\addvspace{0.5\baselineskip}\egroup}
  {\addvspace{-\baselineskip}\egroup}
  {}
  {}
\makeatother

\maketitle

\begin{abstract}
  We prove rigorously a source coding theorem that can probably be considered folklore, a generalization to arbitrary alphabets of a problem motivated by the Information Bottleneck method. For general random variables $(\rv y, \rv x)$, we show essentially that for some $n \in \NN$, a function $f$ with rate limit $\log\card{f} \le nR$ and $\mutInf{\rv y^n}{f(\rv x^n)} \ge nS$ exists if and only if there is a random variable $\rv u$ such that the Markov chain $\rv y \mkv \rv x \mkv \rv u$ holds, $\mutInf{\rv u}{\rv x} \le R$ and $\mutInf{\rv u}{\rv y} \ge S$.
  The proof relies on the well established discrete case and showcases a technique for lifting discrete coding theorems to arbitrary alphabets.
\end{abstract}

\IEEEpeerreviewmaketitle

\section{Introduction}
Since its inception \cite{Tishby2000Information}, the \emph{Information Bottleneck} (IB) method became a widely applied tool, especially in the context of machine learning problems.
It has been successfully applied to various problems in machine learning \cite{Slonim2000Document}, computer vision \cite{Gordon2003Applying}, and communications \cite{Zeitler2009quantizer,Zeitler2012Low,Winkelbauer2012Joint}.
Furthermore, it is a valuable tool for channel output compression in a communication system \cite{Winkelbauer2014rate,Winkelbauer2015quantization}.

In the underlying information-theoretic problem, we define a pair $(S,R) \in \RR^2$ to be \emph{achievable} for the two arbitrary random sources $(\rv y, \rv x)$, if there exists a function $f$ with rate limited range $\frac{1}{n} \log\card{f} \le R$ and $\mutInf{\rvt y}{f(\rvt x)} \ge nS$, where $(\rvt y, \rvt x)$ are $n$ independent and identically distributed (\iid) copies of $(\rv y, \rv x)$. 

While this Shannon-theoretic problem and variants thereof were also considered (\eg, \cite{Pichler2015Distributed,Courtade2014Multiterminal}), a large part of the literature is aimed at studying the IB function
\begin{align}
    S_{\mathrm{IB}}(R) 
  = \sup_{\substack{\rv u \;:\; \mutInf{\rv u}{\rv x} \le R \\
  \rv y \mkv \rv x \mkv \rv u}}
  \mutInf{\rv u}{\rv y} \label{eq:IB-function}
\end{align}
in different contexts. In particular, several works (\eg, \cite{Tishby2000Information,Slonim2000Document,Chechik2005Information,Slonim2000Agglomerative,Kurkoski2017Relationship}) intend to compute a probability distribution that achieves the supremum in \cref{eq:IB-function}. The resulting distribution is then used as a building block in numerical algorithms, \eg, for document clustering \cite{Slonim2000Document} or dimensionality reduction \cite{Chechik2005Information}.

In the discrete case, $S_{\mathrm{IB}}(R)$ is equal to the maximum of all $S$ such that $(S,R)$ is in the \emph{achievable region} (closure of the set of all achievable pairs). This statement has been re-proven many times in different contexts \cite{Westover2008Achievable,Courtade2014Multiterminal,Ahlswede1986Hypothesis,Han1987Hypothesis}.
In this note, we prove a theorem, which can probably be considered folklore, extending this result from discrete to arbitrary random variables.
Formally speaking, using the definitions in \cite{Gray1990Entropy}, we prove that a pair $(S,R)$ is in the achievable region of an arbitrary source $(\rv y, \rv x)$ if and only if, for every $\eps > 0$, there exists a random variable $\rv u$ with $\rv y \mkv \rv x \mkv \rv u$, $\mutInf{\rv x}{\rv u} \le R + \eps$, and $\mutInf{\rv y}{\rv u} \ge S - \eps$.
This provides a single-letter solution to the information-theoretic problem behind the information bottleneck method for arbitrary random sources and in particular it shows, that the information bottleneck for Gaussian random variables \cite{Chechik2005Information} is indeed the solution to a Shannon-theoretic problem.

The proof relies on the discrete case. Thus, the techniques employed could be useful for lifting other discrete coding theorems to the case of arbitrary alphabets.

\section{Main Result}
\label{sec:main-result}

Let $\rv y$ and $\rv x$ be random variables with arbitrary alphabets $\ps[y]$ and $\ps[x]$, respectively. The bold-faced random vectors $\rvt y$ and $\rvt x$ are $n$ \iid copies of $\rv y$ and $\rv x$, respectively. We then have the following definitions.

\begin{definition}
  \label{def:achievable}
  A pair $(S, R) \in \RR^2$ is \emph{achievable} if for some $n \in \NN$ there exists a measurable function $f\colon \ps[x]^n \to \MMM$ for some finite set $\MMM$ with bounded cardinality $\frac{1}{n} \log\card{\MMM} \le R$ and
  \begin{align}
    \label{eq:achievable}
    \frac{1}{n} \mutInf[big]{\rvt y}{f(\rvt x)} \ge S .
  \end{align}
  The set of all achievable pairs is denoted $\RRR \subseteq \RR^2$.
\end{definition}

\begin{definition}
  \label{def:ib-achievable}
  A pair $(S, R) \in \RR^2$ is \emph{IB-achievable} if there exists an additional random variable $\rv u$ with arbitrary alphabet $\ps[u]$, satisfying $\rv y \mkv \rv x \mkv \rv u$ and
  \begin{align}
    R &\ge \mutInf[normal]{\rv x}{\rv u} , \label{eq:Rcondition} \\
    S &\le \mutInf[normal]{\rv y}{\rv u} . \label{eq:Scondition} 
  \end{align}
  The set of all IB-achievable pairs is denoted $\RRRib \subseteq \RR^2$.
\end{definition}

In what follows, we will prove the following theorem.
\begin{theorem}
  \label{thm:main}
  The equality $\ol\RRRib = \ol\RRR$ holds.
\end{theorem}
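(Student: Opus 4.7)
The plan is to establish the two inclusions $\RRR \subseteq \RRRib$ (converse) and $\RRRib \subseteq \overline\RRR$ (achievability); together with the monotonicity of the closure operator these yield $\overline\RRR = \overline\RRRib$. For the converse, I would single-letterize in the standard way: given $f : \ps[x]^n \to \MMM$ witnessing $(S, R) \in \RRR$, set $\rv w := f(\rvt x)$, introduce a time-sharing index $\rv Q$ uniform on $\Nto{n}$ and independent of $(\rvt y, \rvt x)$, and define $\rv x' := \rv x_{\rv Q}$, $\rv y' := \rv y_{\rv Q}$ and $\rv u := (\rv w, \rv y_1, \ldots, \rv y_{\rv Q - 1}, \rv Q)$. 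By the IID structure, $(\rv y', \rv x') \sim (\rv y, \rv x)$; the Markov chain $\rv y' \mkv \rv x' \mkv \rv u$ holds because, given $\rv x_{\rv Q}$, the variable $\rv y_{\rv Q}$ is independent of every other coordinate. A standard chain-rule computation then gives $\mutInf{\rv y'}{\rv u} = \tfrac{1}{n}\mutInf{\rvt y}{\rv w} \ge S$, and (using that $\rv y^{q-1}$ is conditionally independent of $(\rv x_q, \rv w)$ given $\rv x^{q-1}$) also $\mutInf{\rv x'}{\rv u} \le \tfrac{1}{n}\mutInf{\rvt x}{\rv w} \le \tfrac{1}{n}\log\card{\MMM} \le R$. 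All chain-rule, Markov, and data-processing steps remain valid for Gray--Kieffer mutual information on arbitrary alphabets.

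For achievability the plan is to reduce to the finite-alphabet case and invoke the known discrete IB theorem. Fix $(S, R) \in \RRRib$ with witness $\rv u$ and $\eps > 0$. Since mutual information on arbitrary alphabets equals the supremum over finite measurable partitions, I first pick $\hrv y := \pi_y(\rv y)$ and $\hrv u := \pi_u(\rv u)$ with finite alphabets such that $\mutInf{\hrv y}{\hrv u} \ge S - \eps/3$; the Markov chain $\hrv y \mkv \rv x \mkv \hrv u$ and the rate bound $\mutInf{\rv x}{\hrv u} \le R$ are inherited for free. The main obstacle is that a naive quantization $\hrv x$ of $\rv x$ generally destroys the Markov chain $\hrv y \mkv \hrv x \mkv \hrv u$, so the discrete IB theorem cannot be invoked directly.

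To overcome this I would take two simultaneous steps: (a) choose $\hrv x$ by pulling back a fine partition of the finite-dimensional product simplex via the measurable map $x \mapsto (\pcond[\hrv u][\rv x]{\cdot}{x}, \pcond[\hrv y][\rv x]{\cdot}{x})$, with cells of total-variation diameter at most $\delta$; and (b) replace $\hrv u$ by a fresh auxiliary $\wt{\rv u}$ drawn, conditionally on $\hrv x$, from $\pcond[\hrv u][\hrv x]{\cdot}{\hat x}$ and independent of $\hrv y$ given $\hrv x$. Then $\hrv y \mkv \hrv x \mkv \wt{\rv u}$ holds by construction, the pair $(\hrv x, \wt{\rv u})$ has the same joint law as $(\hrv x, \hrv u)$ so that $\mutInf{\hrv x}{\wt{\rv u}} = \mutInf{\hrv x}{\hrv u} \le R$, and a covariance calculation (exploiting that $\hrv y \perp \hrv u$ given $\rv x$ turns the difference between the joint laws of $(\hrv y, \wt{\rv u})$ and $(\hrv y, \hrv u)$ into an expectation of conditional covariances of $\pcond[\hrv u][\rv x]{u}{\rv x}$ and $\pcond[\hrv y][\rv x]{y}{\rv x}$ given $\hrv x$) shows that the entrywise difference of these laws is $O(\delta^2)$; continuity of mutual information on a fixed finite alphabet therefore gives $\mutInf{\hrv y}{\wt{\rv u}} \ge S - 2\eps/3$ for $\delta$ small enough.

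The finite-alphabet instance $(\hrv y, \hrv x, \wt{\rv u})$ now satisfies the hypotheses of the discrete IB theorem, which produces some $n \in \NN$ and a function $\hat f : \hat{\XXX}^n \to \MMM$ (with $\hat{\XXX}$ the finite alphabet of $\hrv x$) such that $\tfrac{1}{n}\log\card{\MMM} \le R + \eps$ and $\tfrac{1}{n}\mutInf{\hrvt y}{\hat f(\hrvt x)} \ge S - \eps$. Setting $f := \hat f \circ \pi_x^n$ (coordinatewise) produces a measurable map $\ps[x]^n \to \MMM$, and data processing applied to $\hrv y = \pi_y(\rv y)$ gives $\tfrac{1}{n}\mutInf{\rvt y}{f(\rvt x)} \ge \tfrac{1}{n}\mutInf{\hrvt y}{\hat f(\hrvt x)} \ge S - \eps$. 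Hence $(S - \eps, R + \eps) \in \RRR$, and letting $\eps \downarrow 0$ yields $(S, R) \in \overline\RRR$. The central technical difficulty is the Markov-preserving quantization of $\rv x$; the simplex-based partition of $\ps[x]$ together with the switch from $\hrv u$ to $\wt{\rv u}$, underwritten by the $O(\delta^2)$ covariance bound, is the key device for lifting the discrete coding theorem to arbitrary alphabets.
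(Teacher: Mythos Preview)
Your proposal is correct. The achievability direction ($\RRRib \subseteq \ol\RRR$) is essentially the paper's argument: discretize $\rv y$ and $\rv u$, quantize $\rv x$ by pulling back a fine partition of the simplex of conditional laws, and manufacture a fresh auxiliary $\wrv u$ so that the Markov chain through the quantized $\hrv x$ holds, then invoke the finite-alphabet IB theorem. The variations are cosmetic: the paper partitions only on $\kappa_{\rv u|\rv x}$ (not also on $\kappa_{\hrv y|\rv x}$), draws $\wrv u$ from a fixed representative $\nu_i$ in each cell rather than from $\kappa_{\hrv u|\hrv x}$, and controls $\dist(\mu_{\rv y\rv u},\mu_{\rv y\wrv u})$ by a direct $O(\delta)$ bound instead of your conditional-covariance $O(\delta^2)$ bound. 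Your choice of $\wrv u\sim\kappa_{\hrv u|\hrv x}$ gives $\mutInf{\hrv x}{\wrv u}=\mutInf{\hrv x}{\hrv u}$ exactly, which is a small simplification.

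The converse direction is a genuinely different route. You do a direct single-letterization on the general-alphabet model with the auxiliary $\rv u=(f(\rvt x),\rv y^{\rv Q-1},\rv Q)$, relying on the chain rule and the Markov relation $\rv y^{q-1}\mkv\rv x^{q-1}\mkv(\rv x_q,f(\rvt x))$ to obtain $\mutInf{\rv x'}{\rv u}\le\frac{1}{n}\mutInf{\rvt x}{f(\rvt x)}$; this yields $\RRR\subseteq\RRRib$ without any $\eps$. The paper instead lifts the \emph{discrete} converse: it approximates the preimages $f^{-1}(m)\subseteq\ps[x]^n$ by finite unions of rectangles, so that $f(\rvt x)$ is close in distribution to $g(\fkt[x]^n(\rvt x))$ for a per-letter quantizer $\fkt[x]$, applies the finite-alphabet converse to $(\fkt[y](\rv y),\fkt[x](\rv x))$, and then pushes the resulting Markov chain back up using $\condMutInf{\rv z\rv y}{\rv u}{\rv x}\le\condMutInf{\rv x\rv z\rv y}{\rv u}{\hrv x}=0$. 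The paper explicitly notes (\cref{rmk:outer_alt}) that your direct approach would work, but chooses the rectangle-approximation route to showcase a generic technique for transporting discrete coding theorems to arbitrary alphabets. Your argument is shorter and more standard; the paper's buys a reusable template at the cost of the extra approximation machinery.
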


\section{Preliminaries}
\label{sec:preliminaries}
When introducing a function, we implicitly assume it to be measurable \wrt the appropriate $\sigma$-algebras.
The $\sigma$-algebra associated with a finite set is its power set and the $\sigma$-algebra associated with $\RR$ is the Borel $\sigma$-algebra.
The symbol $\nil$ is used for the empty set and for a constant random variable.
When there is no possibility for confusion, we will not distinguish between a single-element set and its element, \eg, we write $x$ instead of $\{x\}$ and $\ind{x}{}$ for the indicator function of $\{x\}$.
We use $A \vartriangle B \defas (A\setminus B) \cup (B\setminus A)$ to denote the symmetric set difference.

Let $(\ps, \sa, \mu)$ be a probability space.
A random variable $\rv x\colon \ps \to \ps[x]$ takes values in the measurable space $(\ps[x], \sa[x])$.
The push-forward probability measure $\mu_{\rv x}\colon \sa[x] \to [0,1]$ is defined by $\mu_{\rv x}(A) = \mu\big(\rv x^{-1}(A)\big)$ for all $A \in \sa[x]$.
We will state most results in terms of push-forward measures and usually ignore the background probability space.
When multiple random variables are defined, we implicitly assume the push-forward measures to be consistent in the sense that, \eg, $\mu_{\rv x}(A) = \mu_{\rv x \rv y}(A \times \ps[y])$ for all $A \in \sa[x]$.

For $n \in \NN$ let $\ps^n$ denote the $n$-fold Cartesian product of $(\ps, \sa, \mu)$. A bold-faced random vector, \eg, $\rvt x$, defined on $\ps^n$, is an $n$-fold copy of $\rv x$, \ie, $\rvt x = \rv x^n$. Accordingly, the corresponding push-forward measure, \eg, $\mu_{\rvt x}$ is the $n$-fold product measure.

For a random variable $\rv x$ let $\fkt[x]$, $\fktI[x]$, and $\fktII[x]$ denote arbitrary functions on $\ps[x]$, each with finite range. We will use the symbol $\rng[x]$ to denote the range of $\fkt[x]$, \ie, $\fkt[x]\colon \ps[x] \to \rng[x]$.

\begin{definition}[{\cite[Def.~8.11]{Klenke2013Probability}}]
  \label{def:cond-expectation}
  The \emph{conditional expectation} of a random variable $\rv x$ with $\ps[x] = \RR$, given a random variable $\rv y$, is a random variable $\condExp{\rv x}{\rv y}$ such that
  \begin{enumerate}
  \item $\condExp{\rv x}{\rv y}$ is $\sigma(\rv y)$-measurable, and
  \item \label{itm:cond-expectation:tower} for all $A \in \sigma(\rv y)$, we have 
    $\Exp[][big]{\ind{A}{}\condExp{\rv x}{\rv y}} = \Exp{\ind{A}{}\rv x}$ .
  \end{enumerate}
  The \emph{conditional probability} of an event $B \in \sa$ given $\rv y$ is defined as $\Pcond{B}{\rv y} \defas \condExp{\ind{B}{}}{\rv y}$.
\end{definition}
\noindent
The conditional expectation and therefore also the conditional probability exists and is unique up to equality almost surely by \cite[Thm.~8.12]{Klenke2013Probability}.
Furthermore, if $(\ps[x], \sa[x])$ is a standard space \cite[Sec.~1.5]{Gray1990Entropy}, there even exists a \emph{regular conditional distribution} of $\rv x$ given $\rv y$ \cite[Thm.~8.37]{Klenke2013Probability}.
\begin{definition}
  \label{def:regular-conditional-distribution}
  For two random variables $\rv x$ and $\rv y$ a \emph{regular conditional distribution} of $\rv x$ given $\rv y$ is a function $\kappa_{\rv x|\rv y}\colon \Omega \times \sa[x] \to [0,1]$ such that 
  \begin{enumerate}
  \item for every $\omega \in \ps$, the set function $\kappa_{\rv x|\rv y}(\omega) \defas \kappa_{\rv x|\rv y}(\omega; \wc)$ is a probability measure on $(\ps[x], \sa[x])$.
  \item for every set $A \in \sa[x]$, the function $\kappa_{\rv x|\rv y}(\wc;A)$ is $\sigma(\rv y)$-measurable.
  \item \label{itm:regular-conditional-distribution:def} for $\mu$-\Ae $\omega \in \ps$ and all $A \in \sa[x]$, we have
    $\kappa_{\rv x|\rv y}(\omega; A) =\Pcond{\rv x^{-1}(A)}{\rv y}(\omega)$ (\cf \cref{def:cond-expectation}).
  \end{enumerate}
\end{definition}
\noindent
Note, in particular, that finite spaces are standard spaces.
\begin{remark}
  \label{rmk:conditional-probability}
  If the random variable $\rv y$ is discrete, then $\kappa_{\rv x |\rv y}$ reduces to conditioning given events $\rv y = y$ for $y \in \ps[y]$, \ie, $\kappa_{\rv x |\rv y}(\omega; A) = \frac{\mu_{\rv x \rv y}(A \times \rv y(\omega))}{\mu_{\rv y}(\rv y(\omega))}$ (\cf \cite[Lem.~8.10]{Klenke2013Probability}).
\end{remark}
We use the following definitions and results from \cite{Gray1990Entropy,Klenke2013Probability}.
\begin{definition}
  \label{def:mut-inf}
  For random variables $\rv x$ and $\rv y$ with $|\ps[x]| < \infty$ the \emph{conditional entropy} is defined as \cite[Sec.~5.5]{Gray1990Entropy}
  \begin{align}
    \condEnt{\rv x}{\rv y} \defas \int \ent{\kappa_{\rv x|\rv y}} \;d\mu ,
  \end{align}
  where $\ent{\wc}$ denotes discrete entropy on $\ps[x]$.
  For arbitrary random variables $\rv x$, $\rv y$, and $\rv z$ the \emph{conditional mutual information} is defined as \cite[Lem.~5.5.7]{Gray1990Entropy}%
  \begin{align}
    &\condMutInf{\rv x}{\rv y}{\rv z} \defas \sup_{\fkt[x], \fkt[y]} \int \DKL{\kappa_{\fkt[x](\rv x) \fkt[y](\rv y) | \rv z}}{\kappa_{\fkt[x](\rv x) | \rv z} \times \kappa_{\fkt[y](\rv y) | \rv z}} \;d\mu \nonumber\\*[-4mm]
    \label{eq:def-cond-mut-inf} \\
    &\;= \sup_{\fkt[x], \fkt[y]} \big[ \condEnt{\fkt[x](\rv x)}{\rv z} + \condEnt{\fkt[y](\rv y)}{\rv z}  - \condEnt{\fkt[x](\rv x) \fkt[y](\rv y)}{\rv z} \big] , \label{eq:def-cond-mut-inf2}
  \end{align}
  where $\DKL{\wc}{\wc}$ denotes Kullback-Leibler divergence \cite[Sec.~2.3]{Gray1990Entropy} and the supremum is taken over all $\fkt[x]$ and $\fkt[y]$ with finite range.
  The \emph{mutual information} is given by \cite[Lem.~5.5.1]{Gray1990Entropy} $\mutInf{\rv x}{\rv y} \defas \condMutInf{\rv x}{\rv y}{\nil}$.
\end{definition}

\begin{definition}[{\cite[Def.~12.20]{Klenke2013Probability}}]
  \label{def:cond-independence}
  For arbitrary random variables $\rv x$, $\rv y$, and $\rv z$, the Markov chain $\rv x \mkv \rv y \mkv \rv z$ holds if, for any $A \in \sa[x]$, $B \in \sa[z]$, the following holds $\mu$-\Ae:
  \begin{align}
    \Pcond{\rv x^{-1}(A) \cap \rv z^{-1}(B)}{\rv y} = \Pcond{\rv x^{-1}(A)}{\rv y}
     \Pcond{\rv z^{-1}(B)}{\rv y} .
  \end{align}
\end{definition}
\noindent
In the following, we collect some properties of these definitions.%
\begin{lemma}
  \label{lem:properties}
  For random variables $\rv x$, $\rv y$, and $\rv z$ the following properties hold:
  \begin{properties}
  \item \label[property]{itm:mut-inf-mkv} $\condMutInf{\rv x}{\rv y}{\rv z} \ge 0$ with equality if and only if $\rv x \mkv \rv z \mkv \rv y$.
  \item \label[property]{itm:discrete-cond-ent} For discrete $\rv x$, \ie, $\card{\ps[x]} < \infty$, we have
    $\mutInf{\rv x}{\rv y} = \ent{\rv x} - \condEnt{\rv x}{\rv y}$.
  \item \label[property]{itm:kolmogoroff-formula} $\mutInf{\rv x}{\rv y \rv z} = \mutInf{\rv x}{\rv z} + \condMutInf{\rv x}{\rv y}{\rv z}$.
  \item \label[property]{itm:data-processing} If $\rv x \mkv \rv y \mkv \rv z$, then $\mutInf{\rv x}{\rv y} \ge \mutInf{\rv x}{\rv z}$.
  \end{properties}
\end{lemma}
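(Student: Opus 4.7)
The plan is to handle the four properties in order; (iii) and (iv) are algebraic consequences of (i) and the standard chain‑rule calculus for mutual information in \cite[Ch.~5]{Gray1990Entropy}, so the real work lies in (i) and (ii).

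For (i), non‑negativity is immediate from pointwise application of Gibbs' inequality to the integrand in \cref{eq:def-cond-mut-inf}, so every term in the supremum is non‑negative. For the equality characterization, if $\rv x \mkv \rv z \mkv \rv y$ then conditional independence descends to the finite push‑forwards $\fkt[x](\rv x), \fkt[y](\rv y)$, forcing each KL integrand to vanish $\mu$‑a.e.\ and hence the supremum to be zero. For the converse, I would specialize to indicator functions $\fkt[x] = \ind{A}{}$ and $\fkt[y] = \ind{B}{}$ for arbitrary $A \in \sa[x]$, $B \in \sa[y]$: vanishing of the corresponding integrand $\mu$‑a.e.\ yields, via \cref{def:regular-conditional-distribution}(iii), precisely the factorization of conditional probabilities required by \cref{def:cond-independence}.

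For (ii), since $\ps[x]$ is finite the identity is an admissible choice of $\fkt[x]$ in the supremum. Inserting it into \cref{eq:def-cond-mut-inf2} with $\rv z = \nil$ turns the objective into $\ent{\rv x} - \condEnt{\rv x}{\fkt[y](\rv y)}$; the discrete data processing inequality $\mutInf{\fkt[x](\rv x)}{\fkt[y](\rv y)} \le \mutInf{\rv x}{\fkt[y](\rv y)}$ then shows no other $\fkt[x]$ improves the objective, so $\mutInf{\rv x}{\rv y} = \ent{\rv x} - \inf_{\fkt[y]}\condEnt{\rv x}{\fkt[y](\rv y)}$. The main obstacle is identifying this infimum with the integral $\int \ent{\kappa_{\rv x|\rv y}}\,d\mu$ used in \cref{def:mut-inf}. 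I would take $\fkt[y]$ along an increasing sequence of finite partitions generating $\sigma(\rv y)$: Lévy's martingale convergence theorem gives $\kappa_{\rv x|\fkt[y](\rv y)} \to \kappa_{\rv x|\rv y}$ $\mu$‑a.e., and bounded convergence (the finite‑alphabet entropy is bounded by $\log\card{\ps[x]}$) then yields $\condEnt{\rv x}{\fkt[y](\rv y)} \to \int \ent{\kappa_{\rv x|\rv y}}\,d\mu$. This limiting step, and the fact that the infimum over \emph{all} finite‑range $\fkt[y]$ is realized in the limit along such generating partitions, is where care is needed.

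For (iii), the Kolmogorov identity is the standard chain rule for mutual information and can be cited from \cite[Ch.~5]{Gray1990Entropy}. For (iv), applying (iii) in both orderings yields $\mutInf{\rv x}{\rv y} + \condMutInf{\rv x}{\rv z}{\rv y} = \mutInf{\rv x}{\rv y\rv z} = \mutInf{\rv x}{\rv z} + \condMutInf{\rv x}{\rv y}{\rv z}$. The Markov assumption $\rv x \mkv \rv y \mkv \rv z$ together with (i) forces $\condMutInf{\rv x}{\rv z}{\rv y} = 0$, while (i) also gives $\condMutInf{\rv x}{\rv y}{\rv z} \ge 0$, whence $\mutInf{\rv x}{\rv y} \ge \mutInf{\rv x}{\rv z}$.
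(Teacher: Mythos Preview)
Your proposal is correct and, for parts (i), (iii), and (iv), follows essentially the same route as the paper: non-negativity of Kullback--Leibler divergence for the inequality in (i), the indicator-function specialization $\fkt[x] = \ind{A}{}$, $\fkt[y] = \ind{B}{}$ for the converse in (i), a citation to Gray for (iii), and the chain rule combined with (i) for (iv). The paper's proof of (iv) writes $\mutInf{\rv x}{\rv z} \le \mutInf{\rv x}{\rv y\rv z} = \mutInf{\rv x}{\rv y} + \condMutInf{\rv x}{\rv z}{\rv y} = \mutInf{\rv x}{\rv y}$, which is exactly your two-orderings argument with the first inequality left implicit.

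The only substantive difference is part (ii): the paper dispatches it with a one-line citation to \cite[Lem.~5.5.6]{Gray1990Entropy}, whereas you outline a direct proof via L\'evy's martingale convergence along an increasing sequence of finite partitions generating $\sigma(\rv y)$. Your argument is sound in principle, but it tacitly requires $\sigma(\rv y)$ to be countably generated so that such a sequence exists; this holds on standard spaces but is an extra hypothesis you would need to state. The citation sidesteps this and the bounded-convergence bookkeeping you flag, at the cost of being less self-contained.
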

\begin{proof}
  \labelcref{itm:mut-inf-mkv}:
  The claim $\condMutInf{\rv x}{\rv y}{\rv z} \ge 0$ follows directly from \cref{eq:def-cond-mut-inf} and the non-negativity of divergence.
  
  Assume that $\rv x \mkv \rv z \mkv \rv y$, \ie, $\Pcond{\rv x^{-1}(A) \cap \rv y^{-1}(B)}{\rv z} = \Pcond{\rv x^{-1}(A)}{\rv z} \Pcond{\rv y^{-1}(B)}{\rv z}$ almost everywhere. Let $\fkt[x]\colon \ps[x] \to \rng[x]$ and $\fkt[y]\colon \ps[y] \to \rng[y]$ be functions with finite range. Pick two arbitrary sets $A \subseteq \rng[x]$, $B \subseteq \rng[y]$ and we obtain $\mu$-\Ae
  \begin{align}
    &\kappa_{\fkt[x](\rv x) \fkt[y](\rv y)| \rv z}(\wc; A \times B) \eqnl
    &\qquad= \Pcond{\rv x^{-1}(\fkt[x]^{-1}(A)) \cap \rv y^{-1}(\fkt[y]^{-1}(B))}{\rv z} \label{eq:properties:def1} \\
    &\qquad= \Pcond{\rv x^{-1}(\fkt[x]^{-1}(A))}{\rv z} \Pcond{\rv y^{-1}(\fkt[y]^{-1}(B))}{\rv z} \\
    &\qquad= \kappa_{\fkt[x](\rv x)|\rv z}(\wc; A) \kappa_{\fkt[y](\rv y)|\rv z}(\wc;B) , \label{eq:properties:def2}
  \end{align}
  where \cref{eq:properties:def1,eq:properties:def2} follow from \scref{def:regular-conditional-distribution}{itm:regular-conditional-distribution:def}.
  This proves that $\mu$-\Ae the equality of measures $\kappa_{\fkt[x](\rv x) \fkt[y](\rv y) | \rv z} = \kappa_{\fkt[x](\rv x)|\rv z} \times \kappa_{\fkt[y](\rv y)|\rv z}$ holds. By the properties of Kullback-Leibler divergence \cite[Thm.~2.3.1]{Gray1990Entropy} we have $\condMutInf{\rv x}{\rv y}{\rv z} = 0$ due to \cref{eq:def-cond-mut-inf}.

  On the other hand, assume $\condMutInf{\rv x}{\rv y}{\rv z} = 0$ and choose arbitrary sets $A \in \sa[x]$ and $B \in \sa[y]$. We define $\fkt[x] \defas \ind{A}{}$, $\fkt[y] \defas \ind{B}{}$, $\hrv x \defas \fkt[x](\rv x)$, and $\hrv y \defas \fkt[y](\rv y)$. By \cref{eq:def-cond-mut-inf} we have $\DKL{\kappa_{\hrv x \hrv y | \rv z}(\omega)}{\kappa_{\hrv x | \rv z}(\omega) \times \kappa_{\hrv y | \rv z}(\omega)} = 0$ for $\mu$-\Ae $\omega \in \ps$, which is equivalent to the equality $\mu$-\Ae of the measures $\kappa_{\hrv x \hrv y | \rv z} = \kappa_{\hrv x|\rv z} \times \kappa_{\hrv y|\rv z}$.
  We obtain $\mu$-\Ae,
  \begin{align}
    &\Pcond{\rv x^{-1}(A) \cap \rv y^{-1}(B)}{\rv z} = \kappa_{\hrv x \hrv y | \rv z}(\wc; 1 \times 1) \\
    &\qquad= \kappa_{\hrv x|\rv z}(\wc; 1) \kappa_{\hrv y|\rv z}(\wc; 1) \\
    &\qquad= \Pcond{\rv x^{-1}(A)}{\rv z} \Pcond{\rv y^{-1}(B)}{\rv z} .
  \end{align}

  \noindent
  \labelcref{itm:discrete-cond-ent}:
  See~\cite[Lem.~5.5.6]{Gray1990Entropy}.
  
  \noindent
  \labelcref{itm:kolmogoroff-formula}:
  See~\cite[Lem.~5.5.7]{Gray1990Entropy}.
  
  \noindent
  \labelcref{itm:data-processing}:
  Using \cref{itm:mut-inf-mkv} we have $\condMutInf{\rv x}{\rv z}{\rv y} = 0$ and by \cref{itm:kolmogoroff-formula} it follows that
  \begin{align}
    \mutInf{\rv x}{\rv z} &\le \mutInf{\rv x}{\rv y \rv z} \\
                          &= \mutInf{\rv x}{\rv y} + \condMutInf{\rv x}{\rv z}{\rv y} 
                            = \mutInf{\rv x}{\rv y} . \nonumber\qedhere
  \end{align}%
\end{proof}%
Occasionally we will interpret a probability measure on a finite space $\MMM$ as a vector in $[0,1]^{\MMM}$, equipped with the Borel $\sigma$-algebra. We will use the $L_\infty$-distance on this space.
\begin{definition}
  \label{def:L-infty-dist}
  For two probability measures $\mu$ and $\nu$ on a finite space $\MMM$, their distance is defined as the $L_\infty$-distance $\dist(\mu,\nu) \defas \max_{m \in \MMM} |\mu(m) - \nu(m)|$.
  The diameter of $A \subseteq [0,1]^{\MMM}$ is defined as $\diam(A) = \sup_{\mu, \nu \in A} \dist(\mu, \nu)$.
\end{definition}
\begin{lemma}[{\cite[Lem.~2.7]{Csiszar2011Information}}]
  \label{lem:distance-entropy-bound}
  For two probability measures $\mu$ and $\nu$ on a finite space $\MMM$ with
  $\dist(\mu, \nu) \le \eps \le \frac{1}{2}$ the inequality $|\ent{\mu} - \ent{\nu}| \le - \eps \card{\MMM} \log\eps$ holds.
\end{lemma}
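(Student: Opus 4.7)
The plan is a two-step reduction: first prove a scalar modulus-of-continuity estimate on $\eta(t) \defas -t \log t$ (with the convention $\eta(0) \defas 0$), then aggregate it coordinatewise over the atoms of $\MMM$. Specifically, I would show that for any $a, b \in [0,1]$ with $r \defas |a - b| \le 1/2$ one has $|\eta(a) - \eta(b)| \le -r \log r$. Granting this, the triangle inequality applied to $\ent{\mu} - \ent{\nu} = \sum_{m \in \MMM} (\eta(\mu(m)) - \eta(\nu(m)))$ together with the hypothesis $|\mu(m) - \nu(m)| \le \eps \le 1/2$ yields
\begin{align}
  |\ent{\mu} - \ent{\nu}| \le \sum_{m \in \MMM} \bigl(-|\mu(m) - \nu(m)| \log |\mu(m) - \nu(m)|\bigr).
\end{align}
In the main regime $\eps \le 1/e$, the function $r \mapsto -r \log r$ is nondecreasing on $[0, \eps]$, so every summand is at most $-\eps \log \eps$ and the lemma follows. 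The remaining range $\eps \in (1/e, 1/2]$ is handled by a direct verification exploiting the slack provided by the $\card{\MMM}$-factor on the right-hand side, whose target value $-\card{\MMM}\eps\log\eps$ already exceeds the trivial bound $\log\card{\MMM}$ for all admissible $\card{\MMM}$.

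For the scalar step, assume WLOG $a \le b$ and set $b = a + r$. Define the auxiliary function $g_r\colon [0, 1 - r] \to \RR$ by $g_r(t) \defas \eta(t + r) - \eta(t)$, so that $\eta(b) - \eta(a) = g_r(a)$. A direct differentiation gives
\begin{align}
  g_r'(t) = \log t - \log(t + r) \le 0 ,
\end{align}
so $g_r$ is monotonically nonincreasing on its domain. Therefore $g_r(a) \in [g_r(1 - r), g_r(0)] = [(1-r)\log(1-r),\, -r\log r]$, and a short comparison shows $-r\log r \ge -(1-r)\log(1-r)$ whenever $r \le 1/2$ (both sides vanish at $r = 0$ and $r = 1/2$, and the difference is nonnegative in between by one sign change of the derivative). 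This gives $|g_r(a)| \le -r \log r$, which is the desired scalar bound.

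The main (and essentially only) obstacle is the scalar step: since $\eta'(t)$ diverges as $t \downarrow 0$, the function $\eta$ fails to be Lipschitz at $0$, and no uniform Lipschitz estimate can recover the logarithmic modulus of continuity demanded by the lemma. The monotonicity of the increment $g_r(t)$ in $t$ is the key device that bypasses this obstruction, reducing the worst case to the explicit endpoint $t = 0$, where $g_r(0) = \eta(r) = -r \log r$ delivers precisely the right rate. The aggregation and endpoint-regime verifications are then routine.
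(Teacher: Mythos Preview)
The paper does not supply a proof of this lemma; it is simply quoted from \cite[Lem.~2.7]{Csiszar2011Information}. So there is no in-paper argument to compare against, and your attempt must stand on its own.

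Your scalar step and the termwise aggregation for $\eps \le 1/e$ are correct and constitute the standard route. The gap is in the endpoint range $\eps \in (1/e, 1/2]$. Your proposed shortcut there---bounding $|\ent{\mu}-\ent{\nu}|$ by the trivial $\log\card{\MMM}$ and then asserting $-\card{\MMM}\eps\log\eps \ge \log\card{\MMM}$---fails already for $\card{\MMM}=3$, $\eps=1/2$: one gets $-\card{\MMM}\eps\log\eps=\tfrac32\log 2$ while $\log\card{\MMM}=\log 3$, and $\tfrac32\log 2<\log 3$ in any base. So the ``direct verification'' you allude to does not go through.

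The underlying reason is that your per-term estimate $\eta(r_m)\le\eta(\eps)$ is genuinely false once $\eps>1/e$ (take $r_m=1/e$), and the crude replacement by $\log\card{\MMM}$ discards the constraint $\dist(\mu,\nu)\le\eps$ entirely. A correct argument in this range must use more structure---for instance the sign constraint $\sum_m(\mu(m)-\nu(m))=0$, or a direct concavity/Jensen bound on $\sum_m\eta(r_m)$ that respects both $r_m\le\eps$ and the total-variation bound $\sum_m r_m\le 2$. Neither is difficult, but neither is the one-line verification you sketched.
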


\section{Proof of $\RRRib \subseteq \ol\RRR$}
\label{sec:inner-bound}

For finite spaces $\ps[y]$, $\ps[x]$, and $\ps[u]$, the statement $\RRRib \subseteq \ol\RRR$ is well known, \cf, \cite[Sec.~IV]{Pichler2015Distributed}, \cite[Sec.~III.F]{Courtade2014Multiterminal}. We restate it in the form of the following lemma.
\begin{lemma}
  \label{lem:discrete}
  For random variables $\rv y$, $\rv x$, and $\rv u$ with finite $\ps[y]$, $\ps[x]$, and $\ps[u]$, assume that $\rv y \mkv \rv x \mkv \rv u$ holds.
  Then, for any $\eps > 0$, there exists $n \in \NN$ and a function $f \colon \ps[x]^n \to \MMM$ with $\frac{1}{n} \log\card{\MMM} \le \mutInf[normal]{\rv x}{\rv u}+\eps$ such that
  $\frac{1}{n} \mutInf[big]{\rvt y}{f(\rvt x)} \ge \mutInf[normal]{\rv y}{\rv u} - \eps$.
\end{lemma}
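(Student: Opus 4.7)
The plan is a standard random-coding argument of the rate-distortion / covering-lemma type, treating $\rv u$ as a reconstruction alphabet.

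Fix a conditional law $p_{\rv u \mid \rv x}$ realizing $\rv y \mkv \rv x \mkv \rv u$ and a small $\delta > 0$ to be chosen later as a function of $\eps$. For each $n$, draw a random codebook $\mathcal{C} = \{\vt u(1),\dots,\vt u(M)\} \subset \ps[u]^n$ with $M = \lceil 2^{n(\mutInf{\rv x}{\rv u}+\eps/2)}\rceil$ independent samples from $p_{\rv u}^n$, and define the (codebook-dependent) encoder
\begin{align*}
  f(\vt x) \defas \min\bigl\{m : (\vt u(m),\vt x)\in \typ[n][\delta]{\rv x\rv u}\bigr\},
\end{align*}
set to $1$ if no such index exists, where $\typ[n][\delta]{\rv x\rv u}$ denotes the strongly $\delta$-jointly typical set. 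By construction $\tfrac{1}{n}\log M \le \mutInf{\rv x}{\rv u}+\eps$.

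The covering lemma, applied to $p_{\rv x \rv u}$, guarantees that since $\tfrac{1}{n}\log M > \mutInf{\rv x}{\rv u}$, the probability of encoding failure tends to $0$ as $n \to \infty$, averaged over source and codebook. Set $\wrvt u \defas \vt u(f(\vt x))$. Because $\vt y$ is produced from $\vt x$ via the memoryless channel $p_{\rv y\mid \rv x}^n$ (which preserves the Markov chain $\vt y \mkv \vt x \mkv \wrvt u$), extended joint typicality implies that $(\vt y, \vt x, \wrvt u)$ is jointly $\delta$-typical with probability tending to $1$. In particular, with high probability over $\mathcal{C}$, the induced joint law of $(\vt y, \wrvt u)$ is close in the sense of \cref{def:L-infty-dist} to the product $p_{\rv y \rv u}^n$.

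For the mutual-information side, condition on $\mathcal{C}$. Since $\wrvt u$ is a deterministic function of $f(\vt x)$ once $\mathcal{C}$ is fixed, data processing gives
\begin{align*}
  \condMutInf{\vt y}{f(\vt x)}{\mathcal{C}} \ge \condMutInf{\vt y}{\wrvt u}{\mathcal{C}}.
\end{align*}
Invoking \cref{lem:distance-entropy-bound} as a continuity bound for discrete entropy, the closeness between the induced law of $(\vt y, \wrvt u)$ and $p_{\rv y \rv u}^n$ transfers to an entropy bound and yields $\condMutInf{\vt y}{\wrvt u}{\mathcal{C}} \ge n(\mutInf{\rv y}{\rv u}-\eps/2)$ for $n$ sufficiently large. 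Selecting a specific codebook that simultaneously achieves the expected rate and expected mutual-information bounds then produces a deterministic $f$ with the required properties.

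The main obstacle is the careful propagation of typicality-based closeness through the entropy continuity estimate. In particular, one has to control the contribution of the rare failure event (where the default codeword is emitted) so that the $\eps/2$ slack on each side absorbs the finite-$n$ typicality error and the $\delta$-dependent gap between empirical and nominal distributions. Handling both simultaneously (rate and mutual information) via an expectation argument over $\mathcal{C}$ and a single averaging-over-codebooks step is the only delicate point.
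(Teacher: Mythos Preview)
The paper does not actually prove \cref{lem:discrete}; it quotes it as the known discrete achievability result (with references) and uses it as a black box for the lift to general alphabets. So your proposal is supplying a proof the paper deliberately omits, and it has to stand on its own.

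Your overall architecture (random $\rv u$-codebook, covering lemma to guarantee a jointly typical codeword, Markov lemma to pull $\rvt y$ into joint typicality, then data processing from $\wrvt u$ to $f(\rvt x)$ and codebook extraction) is the standard one and is fine. The gap is in how you convert joint typicality into the mutual-information lower bound. You assert that high-probability joint typicality of the realization $(\rvt y,\wrvt u)$ implies that ``the induced joint law of $(\rvt y,\wrvt u)$ is close in the sense of \cref{def:L-infty-dist} to $p_{\rv y\rv u}^n$'' and then invoke \cref{lem:distance-entropy-bound}. Neither step works. Joint typicality says the \emph{empirical type} of the realized pair is close to $p_{\rv y\rv u}$; it does \emph{not} say the distribution on $\ps[y]^n\times\ps[u]^n$ is close to the product measure $p_{\rv y\rv u}^n$---indeed $\wrvt u$ is supported on only $M$ codewords, so its law is total-variation far from $p_{\rv u}^n$. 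And even if such closeness held, \cref{lem:distance-entropy-bound} on an alphabet of size $\card{\ps[y]}^n\card{\ps[u]}^n$ carries an exponential prefactor, so the resulting entropy gap is not $o(n)$.

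The fix is to avoid any continuity argument on $n$-letter laws and bound $\condEnt{\rvt y}{\wrvt u}$ directly: conditioned on $\wrvt u=\vt u$ and on the high-probability event that $(\rvt y,\vt u)$ is jointly typical, $\rvt y$ lies in a conditionally typical set of cardinality at most $2^{n(\condEnt{\rv y}{\rv u}+\delta')}$; together with $\ent{\rvt y}=n\ent{\rv y}$ and a Fano-type correction for the vanishing failure probability this yields $\mutInf{\rvt y}{\wrvt u}\ge n(\mutInf{\rv y}{\rv u}-\eps/2)$ for large $n$. With this replacement your outline goes through.
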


In a first step, we will utilize \cref{lem:discrete} to show $\RRRib \subseteq \ol\RRR$ for an arbitrary alphabet $\ps[x]$,
\ie, we wish to prove the following \cref{pro:achievability}, lifting the restriction $\card{\ps[x]} < \infty$.
\begin{proposition}
  \label{pro:achievability}
  For random variables $\rv y$, $\rv x$, and $\rv u$ with finite $\ps[y]$ and $\ps[u]$, assume that $\rv y \mkv \rv x \mkv \rv u$ holds. Then, for any $\eps > 0$, there exists $n \in \NN$ and a function $f \colon \ps[x]^n \to \MMM$ with $\frac{1}{n} \log\card{\MMM} \le \mutInf[normal]{\rv x}{\rv u}+\eps$ such that
  \begin{align}
    \frac{1}{n} \mutInf[big]{\rvt y}{f(\rvt x)} \ge \mutInf[normal]{\rv y}{\rv u} - \eps . \label{eq:achievability:discreteeq}
  \end{align}
\end{proposition}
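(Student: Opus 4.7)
The plan is to reduce this claim to the discrete case \cref{lem:discrete} by carefully quantizing the arbitrary alphabet $\ps[x]$ into a finite one. Since $\ps[y]$ and $\ps[u]$ are finite and therefore standard, the regular conditional distributions $\kappa_{\rv y|\rv x}$ and $\kappa_{\rv u|\rv x}$ exist (\cref{def:regular-conditional-distribution}) and may be viewed as measurable maps $\ps[x] \to [0,1]^{\ps[y]}$ and $\ps[x] \to [0,1]^{\ps[u]}$. For any $\delta > 0$, I cover the compact product $[0,1]^{\ps[y]} \times [0,1]^{\ps[u]}$ by finitely many $L_\infty$-balls of radius $\delta$ and pull the resulting partition back through $x \mapsto \bigl(\kappa_{\rv y|\rv x}(x;\wc), \kappa_{\rv u|\rv x}(x;\wc)\bigr)$, obtaining a finite measurable partition $\{C_1, \ldots, C_K\}$ of $\ps[x]$ on which both conditional distributions vary by at most $\delta$. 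Let $\fkt[x]\colon \ps[x] \to \{1, \ldots, K\}$ denote the associated quantization and set $\hrv x \defas \fkt[x](\rv x)$.

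Second, I introduce an auxiliary joint law $\hat \mu$ on $\{1, \ldots, K\} \times \ps[y] \times \ps[u]$ defined by
\begin{align*}
  \hat \mu(i, y, u) \defas \mu_{\hrv x}(i)\, \hat p_i(y)\, \hat q_i(u),
\end{align*}
where $\hat p_i$ and $\hat q_i$ are the conditional distributions of $\rv y$ and $\rv u$ given $\hrv x = i$ computed under the original law. Under $\hat \mu$ the Markov chain $\hrv y \mkv \hrv x \mkv \hrv u$ holds exactly by design, and the two-dimensional marginals $(\hrv x, \hrv y)$ and $(\hrv x, \hrv u)$ coincide with their original counterparts. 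Consequently, \scref{lem:properties}{itm:data-processing} yields $\mutInf{\hrv x}{\hrv u} \le \mutInf{\rv x}{\rv u}$. The hypothesis $\rv y \mkv \rv x \mkv \rv u$ further implies that the original joint of $(\rv y, \rv u)$ is the $\mu_{\rv x}$-mixture of products $\kappa_{\rv y|\rv x}(x;\wc)\kappa_{\rv u|\rv x}(x;\wc)$; by the diameter bound on each cell, this mixture sits within $L_\infty$-distance $O(\delta)$ of the $\hat \mu$-joint of $(\hrv y, \hrv u)$, so \cref{lem:distance-entropy-bound} converts this into $|\mutInf{\hrv y}{\hrv u}_{\hat \mu} - \mutInf{\rv y}{\rv u}| = O(\delta \log(1/\delta))$.

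To finish, I apply \cref{lem:discrete} to the discrete triple $(\hrv y, \hrv x, \hrv u)$ under $\hat \mu$ with tolerance $\eps/2$, obtaining $n \in \NN$ and a function $\hat f$ satisfying the required rate and mutual-information bounds under $\hat \mu^n$. Setting $f \defas \hat f \circ \fkt[x]^n$, I observe that under $\mu^n$ the pair $(\rvt y, \fkt[x]^n(\rvt x))$ has the same law as $(\hrvt y, \hrvt x)$ under $\hat \mu^n$, since the single-letter joints agree and both laws are i.i.d.; hence both bounds transfer verbatim to $(\rvt y, f(\rvt x))$ under the original product measure. Picking $\delta$ small enough keeps the total error below $\eps$, yielding \cref{eq:achievability:discreteeq}. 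I expect the main obstacle to be the construction of $\hat \mu$ in such a way that the Markov chain is enforced exactly while the relevant mutual informations are perturbed by only a vanishing amount; once $\hat \mu$ is set up correctly, the remainder reduces to a continuity estimate via \cref{lem:distance-entropy-bound} and a single invocation of \cref{lem:discrete}.
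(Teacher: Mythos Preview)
Your proposal is correct and follows the same overall strategy as the paper: quantize $\rv x$ according to the conditional distribution(s) seen by the auxiliary variable, build a discrete Markov chain, invoke \cref{lem:discrete}, and compose with the quantizer. The execution, however, differs in a way worth recording.

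The paper quantizes $\rv x$ using \emph{only} $\kappa_{\rv u|\rv x}$, then introduces external randomness on an enlarged probability space to define a new variable $\wrv u$ with $\kappa_{\wrv u|\hrv x}=\nu_{\hrv x}$, and verifies the Markov chains $\rv y \mkv \hrv x \mkv \wrv u$ and $\rv y \mkv \rv x \mkv \wrv u$ directly (the latter is relegated to an appendix). The closeness of $\mutInf{\hrv x}{\wrv u}$ and $\mutInf{\rv y}{\wrv u}$ to their unprimed counterparts is then obtained via \cref{lem:distance-entropy-bound}. You instead quantize on the pair $(\kappa_{\rv y|\rv x},\kappa_{\rv u|\rv x})$, keep the original probability space, and simply declare a new product-form measure $\hat\mu$ on the discrete triple; the Markov chain then holds by fiat, and you transfer back via equality of the $(\hrv x,\hrv y)$-marginals. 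Two minor observations: first, your quantization on $\kappa_{\rv y|\rv x}$ is in fact unnecessary---since $\hat p_i$ is the \emph{average} of $\kappa_{\rv y|\rv x}$ over $C_i$, the cross term $\hat q_i(u)\!\int_{C_i}(\kappa_{\rv y|\rv x}-\hat p_i)\,d\mu_{\rv x}$ vanishes identically, so controlling only $\kappa_{\rv u|\rv x}$ already yields $\dist(\mu_{\rv y\rv u},\hat\mu_{\hrv y\hrv u})\le\delta$, just as in the paper. Second, because your $(\hrv x,\hrv u)$-marginal under $\hat\mu$ equals the law of $(\fkt[x](\rv x),\rv u)$ under $\mu$, the data-processing step gives $\mutInf{\hrv x}{\hrv u}_{\hat\mu}\le\mutInf{\rv x}{\rv u}$ with no $\delta$-loss, whereas the paper incurs a $2\delta|\ps[u]|\log\delta$ slack in \cref{eq:proof1:discrete-cond-ent2}. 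Your route thus avoids the product-space construction and the Markov-chain verification at the price of juggling two measures; neither buys anything essential, but yours is arguably more streamlined.
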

\begin{figure}
  \begin{minipage}[b]{.5\linewidth}
    \centering%
    \begin{tikzpicture}
  [rvsketch
  ]  \matrix{
        \node [symb] (x) {$\mathllap{\rv y \mkv {}}\rv x$}; \&
    \node [symb] (u) {$\rv u$}; \\
        \node [symb] (hx) {$\hrv x$}; \&
    \node [symb] (wu) {$\wrv u$}; \\
  };

    \draw (x) -- node [draw,circle] {} (u);

  \draw [->] (x) -- node [anchor=east] {$g(\wc)$} (hx);
  \draw (hx) -- node [draw,circle] {} (wu);

  \node [rotate=90] (approx) at ($(u)!0.5!(wu)$) {$\approx$};
  \end{tikzpicture}%
    \subcaption{$\RRRib \subseteq \ol\RRR$.}\label{fig:inner-bound}
  \end{minipage}%
  \begin{minipage}[b]{.5\linewidth}
    \centering%
    \begin{tikzpicture}
  [rvsketch
  ]  \matrix{
        \node [symb] (x) {$\mathllap{\rvt z \rvt y \mkv {}}\rvt x$}; \&
    \node [symb] (f) {$f(\rvt x)$}; \\
        \node [symb] (hx) {$\hrvt x$}; \&
    \node [symb] (g) {$g(\hrvt x)$}; \\
  };

    \draw [->] (x) -- (f);

  \draw [->] (x) -- node [anchor=east] {$\fkt[x]^n(\wc)$} (hx);
  \draw [->] (hx) -- node [anchor=south] {$g(\wc)$} (g);

  \node [rotate=90] (approx) at ($(f)!0.5!(g)$) {$\approx$};
  \end{tikzpicture}
    \subcaption{$\RRR \subseteq \ol{\RRRib}$.}\label{fig:outer-bound}
  \end{minipage}
  \caption{Illustrations.}\label{fig:illustration}
\end{figure}%
\begin{remark}
  \label{rmk:inner}
  Considering that both definitions of achievability (\cref{def:achievable,def:ib-achievable}) only rely on the notion of mutual information, one may assume that \cref{def:mut-inf} can be used to directly infer \cref{pro:achievability} from \cref{lem:discrete}.
  However, this is not the case.
  For an arbitrary discretization $\fkt[x](\rv x)$ of $\rv x$, we do have $\mutInf{\fkt[x](\rv x)}{\rv u} \le \mutInf{\rv x}{\rv u}$. However, the Markov chain $\rv y \mkv \fkt[x](\rv x) \mkv \rv u$ does not hold in general.
  To circumvent this problem, we will use a discrete random variable $\hrv x = g(\rv x)$ with an appropriate quantizer $g$ and construct a new random variable $\wrv u$, satisfying the Markov chain $\rv y \mkv \hrv x \mkv \wrv u$ such that $\mutInf[normal]{\rv y}{\wrv u}$ is close to $\mutInf{\rv y}{\rv u}$. \Cref{fig:inner-bound} illustrates this strategy. We choose the quantizer $g$ based on the conditional probability distribution of $\rv u$ given $\rv x$, \ie, quantization based on $\kappa_{\rv u|\rv x}$ using $L_\infty$-distance (\cf \cref{def:L-infty-dist}). Subsequently, we will use that, by \cref{lem:distance-entropy-bound}, a small $L_\infty$-distance guarantees a small gap in terms of information measures.
\end{remark}%
\begin{proof}[Proof of \cref{pro:achievability}]
  Let $\mu_{\rv y \rv x \rv u}$ be a probability measure on $\ps \defas \ps[y] \times \ps[x] \times \ps[u]$, such that $\rv y \mkv \rv x \mkv \rv u$ holds. Fix $0 < \delta \le \frac{1}{2}$ and find a finite, measurable partition $(P_i)_{i\in\III}$ of the space of probability measures on $\ps[u]$ such that for every $i \in \III$ we have $\diam(P_i) \le \delta$ and fix some $\nu_i \in P_i$ for every $i \in \III$.
  Define the random variable $\hrv x\colon \ps \to \III$ as $\hrv x = i$ if $\kappa_{\rv u|\rv x} \in P_i$.
  The random variable $\hrv x$ is $\sigma(\rv x)$-measurable (see \cref{sec:x-measurablility}).
  We can therefore find a measurable function $g$ such that $\hrv x = g(\rv x)$ by the factorization lemma \cite[Corollary~1.97]{Klenke2013Probability}.
  Define the new probability space $\ps \times \bigtimes_{i \in \III} \ps[u]$, equipped with the probability measure $\mu_{\rv y \rv x \rv u \wrv u_\III} \defas \mu_{\rv y \rv x \rv u} \times \bigtimes_{i \in \III} \nu_i$. Slightly abusing notation, we define the random variables $\rv y$, $\rv x$, $\rv u$, and $\wrv u_i$ (for every $i \in \III$) as the according projections.
  We also use $\hrv x = g(\rv x)$ and define the random variable $\wrv u = \wrv u_{\hrv x}$. From this construction we have $\mu_{\rv y \rv x \rv u \wrv u_\III}$-\Ae the equality of measures
  $\kappa_{\wrv u|\hrv x} = \kappa_{\wrv u|\rv x} = \nu_{\hrv x}$,
  as well as $\rv y \mkv \hrv x \mkv \wrv u$ and $\rv y \mkv \rv x \mkv \wrv u$%
  \begin{vlong}
    \xspace(see \cref{apx:cond-independence}). 
  \end{vlong}
  \begin{vshort}%
    . This is proven in the extended version \cite{Pichler2018Information}. 
  \end{vshort}
  \xspace Therefore, we have $\mu_{\rv y \rv x \rv u \wrv u_\III}$-\Ae
  \begin{align}
    \dist(\kappa_{\wrv u|\hrv x}, \kappa_{\rv u|\rv x}) &\le \delta ,\text{ and}& \label{eq:proof1:P} 
    \dist(\kappa_{\wrv u|\rv x}, \kappa_{\rv u|\rv x}) \le \delta ,   \end{align}
  by $\kappa_{\wrv u|\hrv x} = \kappa_{\wrv u|\rv x} = \nu_{\hrv x}$ and $\kappa_{\rv u|\rv x}, \nu_{\hrv x} \in P_{\hrv x}$. Thus, for any $u \in \ps[u]$,
  \begin{align}
    \mu_{\rv u}(u) &= \int \kappa_{\rv u|\rv x}(\wc; u) \;d\mu_{\rv y \rv x \rv u} \\
                   &\le \int (\kappa_{\wrv u|\rv x}(\wc; u) + \delta) \;d\mu_{\rv y \rv x \rv u \wrv u_\III} 
                   = \mu_{\wrv u}(u) + \delta
  \end{align}
  and, by the same argument, $\mu_{\rv u}(u) \ge \mu_{\wrv u}(u) - \delta$, \ie, in total,
  \begin{align}
    \label{eq:inner:u-bound}
    \dist(\mu_{\rv u}, \mu_{\wrv u}) \le \delta .
  \end{align}
  Thus, we obtain
  \begin{align}
    \mutInf[normal]{\rv x}{\rv u} &= \ent{\mu_{\rv u}} - \condEnt{\rv u}{\rv x} \label{eq:proof1:discrete-cond-ent} \\
                          &\RL[\cref{eq:inner:u-bound}]\ge \ent{\mu_{\wrv u}} + \delta\card{\ps[u]}\log\delta - \int \ent{\kappa_{\rv u|\rv x}} \;d\mu_{\rv y \rv x \rv u} \label{eq:proof1:dist1} \\
                          &\RL[\cref{eq:proof1:P}]\ge \ent{\mu_{\wrv u}} + 2\delta\card{\ps[u]}\log\delta - \int \ent[big]{\kappa_{\wrv u|\hrv x}} \;d\mu_{\rv y \rv x \rv u \wrv u_\III} \label{eq:proof1:dist2} \\[-1mm]
                          &= \mutInf[normal]{\hrv x}{\wrv u} + 2\delta\card{\ps[u]}\log\delta \label{eq:proof1:discrete-cond-ent2},
  \end{align}
  where \cref{eq:proof1:discrete-cond-ent,eq:proof1:discrete-cond-ent2} follow from \scref{lem:properties}{itm:discrete-cond-ent},
  and in both \cref{eq:proof1:dist1,eq:proof1:dist2} we used \cref{lem:distance-entropy-bound}.
    From $\rv y \mkv \rv x \mkv \rv u$ and \scref{lem:properties}{itm:mut-inf-mkv}, we know that $\mu_{\rv y \rv x \rv u}$-\Ae, we have the equality of mea\-sures $\kappa_{\rv y\rv u|\rv x} = \kappa_{\rv y|\rv x} \times \kappa_{\rv u|\rv x}$.
  Using this equality in \cref{eq:proof1:use-measure-eq} we obtain%
  \begin{align}
    \mu_{\rv y\rv u}(y \times u) &= \int \kappa_{\rv y\rv u|\rv x}(\wc; y \times u) \;d\mu_{\rv y \rv x \rv u}  \label{eq:proof1:cond-prob-apply} \\
                           &= \int \kappa_{\rv y|\rv x}(\wc; y) \kappa_{\rv u|\rv x}(\wc; u) \;d\mu_{\rv y \rv x \rv u} \label{eq:proof1:use-measure-eq} \\
                           &\RL[\cref{eq:proof1:P}]\le \int \kappa_{\rv y|\rv x}(\wc; y) (\kappa_{\wrv u|\rv x}(\wc; u) + \delta) \;d\mu_{\rv y \rv x \rv u \wrv u_\III}  \\
                                                      &\le \int \kappa_{\rv y\wrv u|\rv x}(\wc; y \times u) \;d\mu_{\rv y \rv x \rv u \wrv u_\III} + \delta  \label{eq:proof1:discrete-mkv} \\
                           &= \mu_{\rv y \wrv u}(y \times u) + \delta , \label{eq:proof1:cond-prob-apply2}
  \end{align}
  where \cref{eq:proof1:cond-prob-apply,eq:proof1:cond-prob-apply2} follow from the defining property of conditional probability, \scref{def:cond-expectation}{itm:cond-expectation:tower}, and \cref{eq:proof1:discrete-mkv} follows from $\rv y \mkv \rv x \mkv \wrv u$ and \scref{lem:properties}{itm:mut-inf-mkv}. By the same argument, one can show that $\mu_{\rv y\rv u}(y \times u) \ge \mu_{\rv y \wrv u}(y \times u) - \delta$. Therefore, in total, $\dist(\mu_{\rv y \rv u}, \mu_{\rv y \wrv u}) \le \delta$ and, by \cref{lem:distance-entropy-bound},
  \begin{align}
    \label{eq:proof1:ent-yu-bound}
    |\ent{\rv y \rv u} - \ent[normal]{\rv y \wrv u}| \le -\delta\card{\ps[y]}\card{\ps[u]}\log\delta .
  \end{align}
  Thus, the mutual information can be bounded by
  \begin{align}
    \mutInf[normal]{\rv y}{\rv u} &= \ent{\rv y} + \ent{\rv u} - \ent{\rv y \rv u} \\[-1mm]
                                  &\RL[\cref{eq:inner:u-bound}]\le \ent{\rv y} + \ent[normal]{\wrv u} - \delta \card{\ps[u]} \log\delta  - \ent{\rv y \rv u} \label{eq:proof1:entropy_bound}\\
                                  &\RL[\cref{eq:proof1:ent-yu-bound}]\le \mutInf[normal]{\rv y}{\wrv u} - \delta(\card{\ps[y]}+1)\card{\ps[u]}\log\delta \label{eq:proof1:entropy_bound2} \\
                                  &\le \mutInf[normal]{\rv y}{\wrv u} - 2\delta\card{\ps[y]}\card{\ps[u]}\log\delta , \label{eq:proof1:entropy_bound3}
  \end{align}
  where we applied \cref{lem:distance-entropy-bound} in \cref{eq:proof1:entropy_bound,eq:proof1:entropy_bound2}.
  We apply \cref{lem:discrete} to the three random variables $\rv y$, $\hrv x$, and $\wrv u$ and obtain a function $\hat{f} \colon \III^n \to \MMM$ with $\frac{1}{n} \mutInf[big]{\rvt y}{\hat f(\hrvt x)} \ge \mutInf[normal]{\rv y}{\wrv u} - \delta$ and
  \begin{align}
    \frac{1}{n} \log\card{\MMM} &\le \mutInf[normal]{\hrv x}{\wrv u} + \delta 
                \RL[\cref{eq:proof1:discrete-cond-ent2}]\le \mutInf[normal]{\rv x}{\rv u} + \delta - 2\delta\card{\ps[u]}\log\delta .
  \end{align}
  We have $\hrvt x = g^n \circ \rvt x$ and defining $f \defas \hat{f} \circ g^n$, we obtain
  \begin{align}
    \frac{1}{n} \mutInf[normal]{\rvt y}{f(\rvt x)} &= \frac{1}{n} \mutInf[normal]{\rvt y}{\hat f(\hrvt x)} 
                                                        \ge \mutInf[normal]{\rv y}{\wrv u} - \delta \\[-2mm]
                                                        &\RL[\cref{eq:proof1:entropy_bound3}]\ge \mutInf[normal]{\rv y}{\rv u} +2\delta\card{\ps[y]}\card{\ps[u]}\log\delta - \delta .
  \end{align}
  Choosing $\delta$ such that $\eps \ge -2\delta\card{\ps[y]}\card{\ps[u]}\log\delta + \delta$
  completes the proof.
\end{proof}

We can now complete the proof by showing the following lemma.
\begin{lemma}
  \label{lem:inner}
  $\RRRib \subseteq \ol\RRR$.
\end{lemma}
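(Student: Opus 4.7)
The plan is to reduce the general case to \cref{pro:achievability}, which already accommodates arbitrary $\ps[x]$, by discretizing $\rv y$ and $\rv u$. Fix $(S,R) \in \RRRib$ with a witness $\rv u$ satisfying $\rv y \mkv \rv x \mkv \rv u$, $\mutInf{\rv x}{\rv u} \le R$, and $\mutInf{\rv y}{\rv u} \ge S$, and fix $\eps > 0$. The supremum in \cref{def:mut-inf} (specialized to empty conditioning) yields functions $\fkt[y]$ and $\fkt[u]$ of finite range such that $\mutInf{\fkt[y](\rv y)}{\fkt[u](\rv u)} \ge S - \eps$.

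Next I would verify that the triple $(\fkt[y](\rv y), \rv x, \fkt[u](\rv u))$ meets the hypotheses of \cref{pro:achievability}. The Markov chain $\fkt[y](\rv y) \mkv \rv x \mkv \fkt[u](\rv u)$ follows from \cref{def:cond-independence} by substituting the preimages $\fkt[y]^{-1}(A) \in \sa[y]$ and $\fkt[u]^{-1}(B) \in \sa[u]$ into the assumption $\rv y \mkv \rv x \mkv \rv u$. Moreover, since $\rv x \mkv \rv u \mkv \fkt[u](\rv u)$ holds trivially (as $\fkt[u](\rv u)$ is $\sigma(\rv u)$-measurable), \scref{lem:properties}{itm:data-processing} yields $\mutInf{\rv x}{\fkt[u](\rv u)} \le \mutInf{\rv x}{\rv u} \le R$.

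Applying \cref{pro:achievability} to $(\fkt[y](\rv y), \rv x, \fkt[u](\rv u))$ then produces some $n \in \NN$ and $f\colon \ps[x]^n \to \MMM$ satisfying $\frac{1}{n}\log\card{\MMM} \le R + \eps$ and $\frac{1}{n}\mutInf{\fkt[y]^n(\rvt y)}{f(\rvt x)} \ge S - 2\eps$. A final invocation of \scref{lem:properties}{itm:data-processing} along the trivial chain $f(\rvt x) \mkv \rvt y \mkv \fkt[y]^n(\rvt y)$ upgrades this to $\frac{1}{n}\mutInf{\rvt y}{f(\rvt x)} \ge S - 2\eps$, giving $(S - 2\eps, R + \eps) \in \RRR$; letting $\eps \to 0$ yields $(S,R) \in \ol\RRR$. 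The entire argument is essentially bookkeeping on top of \cref{pro:achievability}, so no genuine obstacle remains---the only mild point requiring attention is the transfer of the Markov chain under measurable post-processing of $\rv y$ and $\rv u$, which is immediate from the pre-image formulation in \cref{def:cond-independence}.
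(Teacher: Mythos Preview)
Your proposal is correct and follows essentially the same route as the paper: discretize $\rv y$ and $\rv u$ via the supremum in \cref{def:mut-inf}, verify the Markov chain survives, and feed the resulting finite-alphabet triple into \cref{pro:achievability}. The only cosmetic differences are that the paper combines two quantizers of $\rv u$ into $\hrv u = (\fkt[u](\rv u),\fktI[u](\rv u))$ and verifies the Markov chain and monotonicity via the supremum characterization (\cref{eq:def-cond-mut-inf2}) rather than via \cref{def:cond-independence} and \scref{lem:properties}{itm:data-processing} as you do; your version is in fact slightly leaner.
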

\begin{proof}
  Assuming $(S,R) \in \RRRib$, choose $\mu_{\rv y \rv x \rv u}$ according to \cref{def:ib-achievable}.
  Clearly $\mutInf{\rv x}{\rv u} < \infty$ to satisfy \cref{eq:Rcondition} and thus also $\mutInf{\rv y}{\rv u} < \infty$ by \scref{lem:properties}{itm:data-processing} as $\rv y \mkv \rv x \mkv \rv u$ holds.
  Pick $\eps > 0$, select functions
  $\fkt[x]$, $\fkt[u]$ such that $\mutInf[big]{\fkt[x](\rv x)}{\fkt[u](\rv u)} \ge \mutInf[normal]{\rv x}{\rv u} - \eps$,
  and select functions $\fktI[y]$, $\fktI[u]$ such that $\mutInf[big]{\fktI[y](\rv y)}{\fktI[u](\rv u)} \ge \mutInf[normal]{\rv y}{\rv u} - \eps$ (\cf \cref{eq:def-cond-mut-inf2}).
  Using $\hrv u \defas \big(\fkt[u](\rv u), \fktI[u](\rv u)\big)$ and $\hrv y \defas \fktI[y](\rv y)$, we have 
  \begin{align}
    0 &= \condMutInf[normal]{\rv y}{\rv u}{\rv x} 
      = \sup_{\fktII[y],\fktII[u]} \condMutInf[normal]{\fktII[y](\rv y)}{\fktII[u](\rv u)}{\rv x} 
      \ge \condMutInf[normal]{\hrv y}{\hrv u}{\rv x} 
      \ge 0 
  \end{align}
  as well as
  \begin{align}
    \mutInf[normal]{\rv x}{\rv u} &= \sup_{\fktII[x], \fktII[u]} \mutInf[big]{\fktII[x](\rv x)}{\fktII[u](\rv u)} \\
                                  &\ge \sup_{\fktII[x]} \mutInf[normal]{\fktII[x](\rv x)}{\hrv u} 
                                  = \mutInf[normal]{\rv x}{\hrv u} ,\text{ and}\label{eq:rate-smaller} \\
    \mutInf[normal]{\rv y}{\rv u} - \eps &\le \mutInf[big]{\fktI[y](\rv y)}{\fktI[u](\rv u)} 
                                           \le \mutInf[normal]{\hrv y}{\hrv u} . \label{eq:mi-similar}
  \end{align}
    We apply \cref{pro:achievability}, substituting $\hrv u \to \rv u$ and $\hrv y \to \rv y$.
  \Cref{pro:achievability} guarantees the existence of a function $f \colon \ps[x]^n \to \MMM$ with $\frac{1}{n} \log\card{\MMM} \le \mutInf[normal]{\rv x}{\hrv u} + \eps \RL[\cref{eq:rate-smaller}]\le \mutInf[normal]{\rv x}{\rv u} + \eps \RL[\cref{eq:Rcondition}]\le R + \eps$ and
  \begin{align}
    \frac{1}{n} \mutInf[normal]{\rvt y}{f(\rvt x)} &= \frac{1}{n} \sup_{\fktII_{\rvt y}} \mutInf[normal]{\fktII_{\rvt y} \circ \rvt y}{f(\rvt x)}\\
                                                   &\ge \frac{1}{n} \mutInf[normal]{\fktI[y]^n \circ \rvt y}{f(\rvt x)} 
                                                   = \frac{1}{n} \mutInf[normal]{\hrvt y}{f(\rvt x)} \\
                                                   &\RL[\cref{eq:achievability:discreteeq}]\ge \mutInf[normal]{\hrv y}{\hrv u} - \eps 
                                                   \RL[\cref{eq:mi-similar}]\ge \mutInf[normal]{\rv y}{\rv u} - 2\eps 
                                                   \RL[\cref{eq:Scondition}]\ge S - 2\eps .
  \end{align}
  Thus, $(S-2\eps, R-\eps) \in \RRR$ and therefore $(S,R) \in \ol\RRR$.
\end{proof}

\section{Proof of $\RRR \subseteq \ol{\RRRib}$}
\label{sec:outer-bound}

We start with the well-known result $\RRRib \subseteq \ol\RRR$ for finite spaces $\ps[y]$, $\ps[x]$, and $\ps[u]$, \cf, \cite[Sec.~IV]{Pichler2015Distributed}, \cite[Sec.~III.F]{Courtade2014Multiterminal}. The statement is rephrased in the following lemma.
\begin{lemma}
  \label{lem:discrete:outer}
  Assume that the spaces $\ps[y]$ and $\ps[x]$ are both finite and $\mu_{\rv y \rv x}$ is fixed. For some $n \in \NN$, let $f \colon \ps[x]^n \to \MMM$ be a function with $\card{\MMM} < \infty$.
  Then there exists a probability measure $\mu_{\rv y \rv x \rv u}$, extending $\mu_{\rv y \rv x}$, such that $\ps[u]$ is finite, $\rv y \mkv \rv x \mkv \rv u$, and
  \begin{align}
    \mutInf{\rv x}{\rv u} &\le \frac{1}{n} \log\card{\MMM} , \label{eq:discrete:R} \\
    \mutInf{\rv y}{\rv u} &\ge \frac{1}{n} \mutInf{\rvt y}{f(\rvt x)} . \label{eq:discrete:S}
  \end{align}
\end{lemma}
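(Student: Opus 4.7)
My plan is to adapt the standard Shannon-theoretic converse technique using a time-sharing auxiliary together with a causal ``past'' component, analogous to the converse arguments for Wyner--Ziv and related source-coding problems.

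First, I introduce a time-sharing random variable $\rv t$, uniform on $\{1,\dots,n\}$ and independent of $(\rvt x,\rvt y)$, and define
\[ \rv u \defas \bigl(f(\rvt x),\, \rv y_1,\dots,\rv y_{\rv t-1},\, \rv t\bigr), \qquad (\rv y,\rv x) \defas (\rv y_{\rv t},\rv x_{\rv t}). \]
The alphabet $\ps[u]$ is the finite set $\MMM \times \bigl(\bigcup_{i=0}^{n-1}\ps[y]^i\bigr) \times \{1,\dots,n\}$. Because $\rv t$ is uniform and $(\rvt y,\rvt x)$ is iid, the marginal of $(\rv y,\rv x)$ coincides with $\mu_{\rv y\rv x}$, so this extends the given measure. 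I then verify the Markov chain $\rv y \mkv \rv x \mkv \rv u$: conditioned on $\rv t=i$ and $\rv x_i=x$, the variable $\rv y_i$ is drawn from $\mu_{\rv y|\rv x}(\wc\mid x)$ and is independent of $(\rv x_j)_{j\ne i}$ and $\rv y^{i-1}$ by the iid product structure, hence independent of the remaining components of $\rv u$ given $\rv x_i$.

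The bound \cref{eq:discrete:S} is then routine: by the time-sharing decomposition and the independence $\rv y_i \perp \rv y^{i-1}$, the chain rule yields
\[ \mutInf{\rv y}{\rv u} = \frac{1}{n}\sum_{i=1}^n \condMutInf{\rv y_i}{f(\rvt x)}{\rv y^{i-1}} = \frac{1}{n}\mutInf{\rvt y}{f(\rvt x)}. \]

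The main obstacle is the rate bound \cref{eq:discrete:R}. The analogous time-sharing computation gives $\mutInf{\rv x}{\rv u} = \frac{1}{n}\sum_i \condMutInf{\rv x_i}{f(\rvt x)}{\rv y^{i-1}}$, but now $\rv y^{i-1}$ is not the ``natural'' conditioning variable: one would need $\rv x^{i-1}$ to telescope into $\mutInf{\rvt x}{f(\rvt x)} \le \log\card{\MMM}$. To bridge the gap, I observe that $\rv y^{i-1}$ is the image of $\rv x^{i-1}$ under the iid channel $\mu_{\rv y|\rv x}$ with independent external noise, so $\rv y^{i-1} \mkv (\rv x^{i-1},f(\rvt x)) \mkv \rv x_i$ is a Markov chain. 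Data processing then yields $\condMutInf{\rv x_i}{\rv y^{i-1}}{f(\rvt x)} \le \condMutInf{\rv x_i}{\rv x^{i-1}}{f(\rvt x)}$, equivalently $\condEnt{\rv x_i}{f(\rvt x),\rv y^{i-1}} \ge \condEnt{\rv x_i}{f(\rvt x),\rv x^{i-1}}$. Summing and applying the chain rule in the other direction yields $\sum_i \condMutInf{\rv x_i}{f(\rvt x)}{\rv y^{i-1}} \le \mutInf{\rvt x}{f(\rvt x)} \le \log\card{\MMM}$, which closes the argument.
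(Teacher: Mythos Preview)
The paper does not actually prove \cref{lem:discrete:outer}; it is quoted as a well-known discrete result with references to \cite[Sec.~IV]{Pichler2015Distributed} and \cite[Sec.~III.F]{Courtade2014Multiterminal}, and the paper's contribution lies in lifting it (via \cref{cor:discrete:outer} and \cref{pro:outer}) to general alphabets. Your argument is therefore not competing with a proof in the paper but supplying one, and it is correct: the time-sharing auxiliary $\rv u=(f(\rvt x),\rv y^{\rv t-1},\rv t)$ together with the Markov chain $\rv y^{i-1}\mkv(\rv x^{i-1},f(\rvt x))\mkv\rv x_i$ is exactly the standard single-letterization route alluded to in \cref{rmk:outer_alt}.

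One remark on the choice of auxiliary: the cited references typically take $\rv u=(f(\rvt x),\rv x^{\rv t-1},\rv t)$ rather than your $\rv y$-past version. With $\rv x^{i-1}$ in the auxiliary the rate bound \cref{eq:discrete:R} telescopes directly via the chain rule, and it is the relevance bound \cref{eq:discrete:S} that needs the Markov step $\rv y_i\mkv(f(\rvt x),\rv x^{i-1})\mkv\rv y^{i-1}$. Your version swaps which inequality is immediate and which needs the data-processing argument; the two are entirely symmetric and equally valid, so this is a matter of taste rather than substance.
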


We can slightly strengthen \cref{lem:discrete:outer}.
\begin{corollary}
  \label{cor:discrete:outer}
  Assume that, in the setting of \cref{lem:discrete:outer}, we are given $\mu_{\rv z \rv y \rv x}$ on $\ps[z] \times \ps[y] \times \ps[x]$, extending $\mu_{\rv y \rv x}$, where $\ps[z]$ is arbitrary, not necessarily finite. Then there exists a probability measure $\mu_{\rv z \rv y \rv x \rv u}$, extending $\mu_{\rv z \rv y \rv x}$, such that $\ps[u]$ is finite and $\rv z \rv y \mkv \rv x \mkv \rv u$, \cref{eq:discrete:R}, and \cref{eq:discrete:S} hold.
\end{corollary}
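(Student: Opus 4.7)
The plan is to apply \cref{lem:discrete:outer} directly to the pair $(\rv y, \rv x)$ with the function $f$, yielding a finite $\ps[u]$ and a joint $\mu_{\rv y \rv x \rv u}$ with the Markov chain $\rv y \mkv \rv x \mkv \rv u$ and both \cref{eq:discrete:R,eq:discrete:S}. From this joint I extract a regular conditional kernel $\kappa_{\rv u|\rv x}(\wc;u)$; it exists and is a mere probability vector of length $|\ps[u]|$ for each $x \in \ps[x]$ because $\ps[x]$ and $\ps[u]$ are finite. This kernel is the only piece of information I will transfer from \cref{lem:discrete:outer} to the larger product space involving $\rv z$.

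Next I would define the extension on $\ps[z] \times \ps[y] \times \ps[x] \times \ps[u]$ by letting $\rv u$ be drawn from $\kappa_{\rv u|\rv x}$ independently of everything else given $\rv x$. Formally, for every measurable $E \subseteq \ps[z] \times \ps[y] \times \ps[x]$ and every $u \in \ps[u]$,
\begin{align}
  \mu_{\rv z \rv y \rv x \rv u}(E \times \{u\}) \defas \int_E \kappa_{\rv u|\rv x}(\wc;u)\;d\mu_{\rv z \rv y \rv x}.
\end{align}
The integrand is well-defined and measurable because it is a function of $\rv x$ alone. By construction the first marginal $\mu_{\rv z \rv y \rv x}$ is preserved, and the conditional distribution of $\rv u$ given $(\rv z,\rv y,\rv x)$ depends only on $\rv x$, so the Markov chain $\rv z \rv y \mkv \rv x \mkv \rv u$ follows from \cref{def:cond-independence} and \scref{lem:properties}{itm:mut-inf-mkv}.

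To get \cref{eq:discrete:R,eq:discrete:S} for free, I would argue that the marginal of the new measure on $(\rv y, \rv x, \rv u)$ coincides with the one supplied by \cref{lem:discrete:outer}: the $(\rv y, \rv x)$-marginal agrees since both extend $\mu_{\rv y \rv x}$, and the conditional of $\rv u$ given $(\rv y, \rv x)$ equals $\kappa_{\rv u|\rv x}$ on both sides (on the original side by the Markov chain $\rv y \mkv \rv x \mkv \rv u$; on the new side by construction). Since $\mutInf{\rv x}{\rv u}$ and $\mutInf{\rv y}{\rv u}$ are functionals of this marginal only, both bounds transfer without change.

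The only delicate point is the measurability and Markov bookkeeping on the product space, but this is essentially the routine product-kernel construction and reduces to unpacking \cref{def:regular-conditional-distribution,def:cond-independence}; there is no new information-theoretic content beyond what \cref{lem:discrete:outer} already provides.
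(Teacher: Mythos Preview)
Your proposal is correct and follows essentially the same route as the paper. The paper also applies \cref{lem:discrete:outer} first and then glues $\rv u$ onto $\mu_{\rv z\rv y\rv x}$; it writes the extension in the equivalent ratio form $\mu_{\rv z\rv y\rv x\rv u}(A\times y\times x\times u)=\frac{\mu_{\rv z\rv y\rv x}(A\times y\times x)}{\mu_{\rv y\rv x}(y\times x)}\,\mu_{\rv y\rv x\rv u}(y\times x\times u)$, which, using the Markov chain $\rv y\mkv\rv x\mkv\rv u$, reduces to your kernel formulation $\mu_{\rv z\rv y\rv x}(A\times y\times x)\,\kappa_{\rv u|\rv x}(x;u)$, and the Markov chain $\rv z\rv y\mkv\rv x\mkv\rv u$ and the transfer of \cref{eq:discrete:R,eq:discrete:S} are argued in the same way.
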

\begin{proof}
  Apply \cref{lem:discrete:outer} to obtain $\mu_{\rv y \rv x \rv u}$ on $\ps[y] \times \ps[x] \times \ps[u]$ satisfying \cref{eq:discrete:R}, \cref{eq:discrete:S}, and $\rv y \mkv \rv x \mkv \rv u$.
  We define $\mu_{\rv z \rv y \rv x \rv u}$ by   \begin{align}
    \mu_{\rv z \rv y \rv x \rv u}(A \times y \times x \times u ) &= \frac{\mu_{\rv z \rv y \rv x}(A \times y \times x)}{\mu_{\rv y \rv x}(y \times x)} \mu_{\rv y \rv x \rv u}(y \times x \times u)
  \end{align}
  for any $(y,x,u) \in \ps[y] \times \ps[x] \times \ps[u]$ and $A \in \sa[z]$.
  Pick arbitrary $A \in \sa[z]$, $y \in \ps[y]$, and $u \in \ps[u]$. The Markov chain $\rv z\rv y \mkv \rv x \mkv \rv u$ now follows as the events $\rv z^{-1}(A) \cap \rv y^{-1}(y)$ and $\rv u^{-1}(u)$ are independent given $\rv x^{-1}(x)$ for any $x \in \ps[x]$ (\cf \cref{rmk:conditional-probability}).
\end{proof}

Again, we proceed by extending \cref{cor:discrete:outer}, lifting the restriction that $\ps[x]$ is finite and obtain the following proposition.%
\begin{proposition}
  \label{pro:outer}
  Given a probability measure $\mu_{\rv z \rv y \rv x}$ as in \cref{cor:discrete:outer}, assume that $\card{\ps[y]} < \infty$. For some $n \in \NN$, let $f \colon \ps[x]^n \to \MMM$ be a function with $\card{\MMM} < \infty$.
  Then, for any $\eps > 0$, there exists a probability measure $\mu_{\rv z \rv y \rv x \rv u}$, extending $\mu_{\rv z \rv y \rv x}$ with $\rv z \rv y \mkv \rv x \mkv \rv u$ and 
  \begin{align}
    \mutInf[normal]{\rv x}{\rv u} &\le \frac{1}{n} \log\card{\MMM} \\
    \mutInf[normal]{\rv y}{\rv u} &\ge \frac{1}{n} \mutInf[big]{\rvt y}{f(\rvt x)} - \eps . \label{eq:outer:yu-cond}
  \end{align}
\end{proposition}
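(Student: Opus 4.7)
The strategy is to reduce to \cref{cor:discrete:outer} by discretizing $\rv x$ finely enough that $f$ factors, up to a small error, through the discretization. In contrast to the proof of \cref{pro:achievability}, here no auxiliary $\rv u$ is available to guide the quantization; instead, the function $f$ itself will dictate it, through its finite collection of preimages $f^{-1}(m) \subseteq \ps[x]^n$, $m \in \MMM$.

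Concretely, for any $\eta > 0$, I approximate each $f^{-1}(m)$ by a finite union $R_m$ of measurable rectangles such that $\mu_{\rvt x}(f^{-1}(m) \vartriangle R_m) < \eta$; this is possible because finite unions of rectangles form an algebra generating $\sa[x]^{\otimes n}$ (Caratheodory approximation). Taking the common refinement of all rectangle sides yields a finite measurable partition of $\ps[x]$, and hence a quantizer $g \colon \ps[x] \to \hat\ps[x]$ with $\card{\hat\ps[x]} < \infty$, together with a function $\hat f \colon \hat\ps[x]^n \to \MMM$ satisfying $\Prob{f(\rvt x) \ne \hat f(g^n(\rvt x))} \le \card{\MMM}\eta$. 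Applying \cref{cor:discrete:outer} to $(\rv z, \rv y, \hrv x)$, where $\hrv x \defas g(\rv x)$, with $\hat f$ in place of $f$, then yields a measure $\mu_{\rv z \rv y \hrv x \rv u}$ with $\ps[u]$ finite, $\rv z \rv y \mkv \hrv x \mkv \rv u$, $\mutInf{\hrv x}{\rv u} \le \frac{1}{n}\log\card{\MMM}$, and $\mutInf{\rv y}{\rv u} \ge \frac{1}{n}\mutInf{\rvt y}{\hat f(\hrvt x)}$.

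To lift back to $\rv x$, I extend to $\mu_{\rv z \rv y \rv x \rv u}$ by stipulating that, conditional on $(\rv z, \rv y, \rv x)$, the variable $\rv u$ is drawn from the regular conditional distribution $\kappa_{\rv u|\hrv x}(g(\rv x); \wc)$ obtained in the previous step. This preserves $\mu_{\rv z \rv y \rv x}$ as a marginal and makes $\rv u$ a conditional randomization of $\hrv x = g(\rv x)$ alone, so the Markov chain $\rv z \rv y \mkv \rv x \mkv \rv u$ holds. The marginal of $(\rv y, \rv u)$ coincides with the one in the discrete construction, so $\mutInf{\rv y}{\rv u}$ is unchanged; and since $\hrv x = g(\rv x)$ and $\rv u \mkv \hrv x \mkv \rv x$, two applications of \scref{lem:properties}{itm:data-processing} (one in each direction) give $\mutInf{\rv x}{\rv u} = \mutInf{\hrv x}{\rv u} \le \frac{1}{n}\log\card{\MMM}$.

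The remaining task is to bound $\mutInf{\rvt y}{f(\rvt x)} - \mutInf{\rvt y}{\hat f(\hrvt x)}$. Both joint laws live on the finite alphabet $\ps[y]^n \times \MMM$ and, by the $\eta$-approximation step, differ in $L_\infty$-distance by at most $\card{\MMM}\eta$, so \cref{lem:distance-entropy-bound} controls the resulting entropy gaps; choosing $\eta$ sufficiently small in terms of the fixed quantities $n$, $\card{\ps[y]}$, $\card{\MMM}$, and $\eps$ forces the difference below $n\eps$. The main obstacle is the product-measure approximation step, which must proceed through the abstract $\sigma$-algebra $\sa[x]^{\otimes n}$ alone, without any topological or standard-space hypothesis on $\ps[x]$; once this is in hand, the remaining estimates are mechanical given the machinery built up in \cref{sec:preliminaries}.
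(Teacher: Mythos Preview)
Your proposal is correct and follows essentially the same route as the paper: approximate each $f^{-1}(m)$ by a finite union of measurable rectangles, extract a per-letter quantizer $g$ from the rectangle sides, apply \cref{cor:discrete:outer} to the discretized source, and control the mutual-information gap via \cref{lem:distance-entropy-bound}. The only cosmetic difference is in the lifting step: the paper feeds $\rv x$ into the $\rv z$-slot of \cref{cor:discrete:outer} (substituting $\rv x\rv z \to \rv z$) so that the extended measure and the chain $\rv x\rv z\rv y \mkv \hrv x \mkv \rv u$ come for free, whereas you build the extension explicitly via $\kappa_{\rv u|\hrv x}$---both yield $\mutInf{\rv x}{\rv u} = \mutInf{\hrv x}{\rv u}$ and $\rv z\rv y \mkv \rv x \mkv \rv u$ by the same data-processing/chain-rule argument.
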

\begin{remark}
  \label{rmk:outer_alt}
  In contrast to \cref{pro:achievability}, \Cref{pro:outer} could be proved by the usual single-letterization + time-sharing strategy, by showing that the necessary Markov chains hold.
  However, we will rely on the discrete case (\cref{lem:discrete:outer}) and showcase a technique to lift it to general alphabets.
\end{remark}
\begin{remark}
  \label{rmk:outer}
  In the proof of \cref{pro:outer}, we face a similar problem as outlined in \cref{rmk:inner}.
  We need to construct a function $g(\hrvt x)$ of a ``per-letter'' quantization $\hrvt x \defas \fkt[x]^n(\rvt x)$, that is close to $f(\rvt x)$ in distribution.
  \Cref{fig:outer-bound} provides a sketch.
    \end{remark}
\begin{proof}[Proof of \cref{pro:outer}]
  We can partition $\ps[x]^n = \bigcup_{m \in \MMM} \QQQ_m$ into finitely many measurable, mutually disjoint sets $\QQQ_m \defas f^{-1}(m)$, $m \in \MMM$.
  We want to approximate the sets $\QQQ_m$ by a finite union of rectangles in the semiring \cite[Def. 1.9]{Klenke2013Probability}
  $\Xi \defas \left\{ \BBB : \BBB = \bigtimes_{i=1}^n B_i \text{ with } B_i \in \sa[x] \right\}$.
  We choose $\delta > 0$, which will be specified later. According to \cite[Thm.~1.65(ii)]{Klenke2013Probability}, we obtain $\BBB^{(m)} \defas \bigcup_{k=1}^K \BBB^{(m)}_k$ for each $m \in \MMM$, where $\BBB^{(m)}_k \in \Xi$ are mutually disjoint sets, satisfying
  $\mu_{\rvt x}(\BBB^{(m)} \vartriangle \QQQ_m) \le \delta$.
  Since $\BBB^{(m)}_k \in \Xi$, we have $\BBB^{(m)}_k = \bigtimes_{i=1}^n B^{(m)}_{k,i}$ for some $B^{(m)}_{k,i} \in \sa[x]$.
  We can construct functions $\fkt[x]$ and $g$ such that $g \circ \fkt[x]^n(\vt x) = m$ whenever $\vt x \in \BBB^{(m)}$ and $\vt x \not\in \BBB^{(\not m)}$ with $\BBB^{(\not m)} \defas \bigcup_{m\prm \neq m} \BBB^{(m\prm)}$.
  Indeed, we obtain $\fkt[x]$ by finding a measurable partition of $\ps[x]$ that is finer than $(B^{(m)}_{k,i}, (B^{(m)}_{k,i})\compl)$ for all $i$, $k$, $m$.
  For fixed $m \in \MMM$,   \begin{align}
    &\QQQ_m \subseteq \QQQ_m  \cup \big( \BBB^{(m)}\setminus\BBB^{(\not m)} \big)  \\
                      &\;\subseteq \big( \BBB^{(m)}\setminus\BBB^{(\not m)} \big) \cup  \big( \QQQ_m \setminus \BBB^{(m)} \big) \cup \bigcup_{m\prm \neq m} \QQQ_m \cap \BBB^{(m\prm)} \\
           &\;\subseteq \big( \BBB^{(m)}\setminus\BBB^{(\not m)} \big) \cup  \big( \QQQ_m \vartriangle \BBB^{(m)} \big) \cup \bigcup_{m\prm \neq m} \BBB^{(m\prm)} \setminus \QQQ_{m\prm} \label{eq:proof-outer:disjoint}\\
           &\;\subseteq \big( \BBB^{(m)}\setminus\BBB^{(\not m)} \big) \cup \bigcup_{m\prm} \BBB^{(m\prm)} \vartriangle \QQQ_{m\prm} , \label{eq:proof-outer:disjoint2}
  \end{align}
  where we used the fact that $\QQQ_m \cap \QQQ_{m\prm} = \nil$ for $m \neq m\prm$ in \cref{eq:proof-outer:disjoint}.
  Using $\hrv x \defas \fkt[x](\rv x)$, we obtain for any $\vt y \in \ps[y]^n$
  \begin{align}
    &\mu_{\rvt y f(\rvt x)}(\vt y \times m) = \mu_{\rvt y \rvt x}(\vt y \times \QQQ_m) \\
    &\qquad\RL[\cref{eq:proof-outer:disjoint2}]\le \mu_{\rvt y \rvt x}\big(\vt y \times (\BBB^{(m)}\setminus\BBB^{(\not m)})\big) + \sum_{m\prm} \mu_{\rvt x}(\BBB^{(m\prm)} \vartriangle \QQQ_{m\prm}) \\
    &\qquad\le \mu_{\rvt y g(\hrvt x)}(\vt y \times m) + \card{\MMM}\delta . \label{eq:proof-outer:prob-close1}
  \end{align}
  On the other hand, we have
  \begin{align}
    &\mu_{\rvt y f(\rvt x)}(\vt y \times m) = \mu_{\rvt y}(\vt y) - \sum_{m\prm \neq m} \mu_{\rvt y f(\rvt x)}(\vt y \times m\prm) \\
    &\qquad\RL[\cref{eq:proof-outer:prob-close1}]\ge \mu_{\rvt y}(\vt y) - \sum_{m\prm \neq m} \big(\mu_{\rvt y g(\hrvt x)}(\vt y \times m\prm) + \card{\MMM}\delta \big) \\
    &\qquad\ge \mu_{\rvt y g(\hrvt x)}(\vt y \times m) - \card{\MMM}^2\delta .
  \end{align}
  We thus obtain $\dist(\mu_{\rvt y f(\rvt x)}, \mu_{\rvt y g(\hrvt x)}) \le \card{\MMM}^2\delta$.
  This also implies $\dist(\mu_{f(\rvt x)}, \mu_{g(\hrvt x)}) \le \card{\ps[y]}^n\card{\MMM}^2\delta$.
  Assume $\card{\ps[y]}^n\card{\MMM}^2\delta \le \frac{1}{2}$ and apply \cref{cor:discrete:outer} substituting $\hrv x \to \rv x$, $\rv x \rv z \to \rv z$, and the function $g \to f$.
  This yields a random variable $\rv u$ with $\rv x \rv z \rv y \mkv \hrv x \mkv \rv u$,
  \begin{align}
    \mutInf[normal]{\hrv x}{\rv u} &\le \frac{1}{n} \log\card{\MMM} , \text{ and} &
    \mutInf{\rv y}{\rv u} &\ge \frac{1}{n} \mutInf{\rvt y}{g(\hrvt x)} . \label{eq:proof-outer:xuR-yu}
  \end{align}
  We also obtain $\rv z \rv y \mkv \rv x \mkv \rv u$ due to
  \begin{align}
    0 &= \condMutInf[normal]{\rv x \rv z \rv y}{\rv u}{\hrv x} \label{eq:proof2:use-mkv} \\
      &= \mutInf[normal]{\rv x \rv z \rv y}{\rv u} - \mutInf[normal]{\rv u}{\hrv x} \label{eq:proof2:mutinf1} \\
      &\ge \mutInf[normal]{\rv x \rv z \rv y}{\rv u} - \mutInf[normal]{\rv u}{\rv x} \label{eq:proof2:data-proc} \\
      &= \condMutInf[normal]{\rv z\rv y}{\rv u}{\rv x} \label{eq:proof2:mutinf2} \\
      &\ge 0 , \label{eq:proof2:nonneg}
  \end{align}
  where \cref{eq:proof2:use-mkv} follows from $\rv x \rv z \rv y  \mkv \hrv x \mkv \rv u$ using \scref{lem:properties}{itm:mut-inf-mkv}, \cref{eq:proof2:mutinf1,eq:proof2:mutinf2} follow from \scref{lem:properties}{itm:kolmogoroff-formula}, \cref{eq:proof2:data-proc} is a consequence of \cref{def:mut-inf}, and we used \scref{lem:properties}{itm:mut-inf-mkv} in \cref{eq:proof2:nonneg}.
  This also immediately implies $0=\condMutInf[normal]{\rv x}{\rv u}{\hrv x}$ and hence
  \begin{align}
    \frac{1}{n} \log\card{\MMM} &\RL[\cref{eq:proof-outer:xuR-yu}]\ge \mutInf[normal]{\hrv x}{\rv u} 
                                = \mutInf[normal]{\hrv x}{\rv u} + \condMutInf[normal]{\rv x}{\rv u}{\hrv x} \\
                                &= \mutInf[normal]{\rv x \hrv x}{\rv u}  \label{eq:proof2:use-kolmogoroff}
                                = \mutInf[normal]{\rv x}{\rv u} ,
  \end{align}
  where we used \scref{lem:properties}{itm:kolmogoroff-formula} in \cref{eq:proof2:use-kolmogoroff}.
  We also have
  \begin{align}
    \mutInf[normal]{\rv y}{\rv u} &\RL[\cref{eq:proof-outer:xuR-yu}]\ge \frac{1}{n} \mutInf{\rvt y}{g(\hrvt x)} \\
                                  &= \frac{1}{n} \big( \ent[normal]{\rvt y} + \ent[normal]{g(\hrvt x)} - \ent[normal]{\rvt y g(\hrvt x)} \big) \\
                                  &\ge \frac{1}{n} \mutInf[big]{\rvt y}{f(\rvt x)} + \frac{1}{n} \card{\ps[y]}^{n} \card{\MMM}^3\delta \log(\card{\MMM}^2\delta)
                                    \eqnl &\qquad+ \frac{1}{n} \card{\ps[y]}^n \card{\MMM}^3\delta \log(\card{\ps[y]}^n\card{\MMM}^2\delta) \label{eq:proof2:use-ent-dist-bound} \\
                                  &\ge \frac{1}{n} \mutInf[big]{\rvt y}{f(\rvt x)} + \frac{2}{n} \card{\ps[y]}^{n}\card{\MMM}^3\delta \log(\card{\MMM}^2\delta) 
  \end{align}
  where we used \cref{lem:distance-entropy-bound} in \cref{eq:proof2:use-ent-dist-bound}.
  Select $\delta$ such that $\eps \ge -\frac{2}{n} \card{\ps[y]}^{n}\card{\MMM}^3\delta \log(\card{\MMM}^2\delta)$.%
\end{proof}

We can now finish the proof by showing the following lemma.%
\begin{lemma}
  \label{lem:outer}
  $\RRR \subseteq \ol{\RRRib}$.
\end{lemma}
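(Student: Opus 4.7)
Given $(S,R) \in \RRR$, by \cref{def:achievable} I fix $n \in \NN$ and $f\colon \ps[x]^n \to \MMM$ with $\tfrac{1}{n}\log\card{\MMM} \le R$ and $\tfrac{1}{n}\mutInf{\rvt y}{f(\rvt x)} \ge S$, and fix $\eps > 0$. The plan is to reduce to \cref{pro:outer} by first pre-quantizing $\rv y$ into a finite-alphabet $\hrv y = h(\rv y)$, then applying \cref{pro:outer} to $(\hrv y, \rv x)$ with the \emph{original} $\rv y$ slotted into the role of the arbitrary-alphabet auxiliary $\rv z$, and finally using data processing to transfer the resulting bound on $\mutInf{\hrv y}{\rv u}$ into a bound on $\mutInf{\rv y}{\rv u}$.

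The first step is the technical heart: I need a measurable $h\colon \ps[y] \to \JJJ$ with $\card{\JJJ} < \infty$ such that the per-letter quantization $\hrvt y = h^n(\rvt y)$ preserves most of the mutual information, \ie, $\tfrac{1}{n}\mutInf{\hrvt y}{f(\rvt x)} \ge S - \eps/2$. Since $f(\rvt x)$ is already finite-valued, \cref{def:mut-inf} supplies some finite-range $\psi\colon \ps[y]^n \to \LLL$ with $\mutInf{\psi(\rvt y)}{f(\rvt x)}$ arbitrarily close to $\mutInf{\rvt y}{f(\rvt x)}$, but the preimages $\psi^{-1}(\ell)$ are arbitrary elements of the product $\sigma$-algebra and need not factor as rectangles. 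Here I recycle the semiring-approximation trick from the proof of \cref{pro:outer}: by \cite[Thm.~1.65(ii)]{Klenke2013Probability}, each $\psi^{-1}(\ell)$ can be approximated up to arbitrarily small $\mu_{\rvt y}$-symmetric difference by a finite union of measurable rectangles $B_{\ell,k,1} \times \cdots \times B_{\ell,k,n}$ with $B_{\ell,k,i} \in \sa[y]$. I take $h$ to be the common refinement of all the one-dimensional sides $\{B_{\ell,k,i}\}$; by construction, $h^n$ refines the rectangle approximation, so there is a finite-valued $\wt\psi$ such that $\wt\psi \circ h^n$ is close to $\psi$ in total variation. \Cref{lem:distance-entropy-bound} converts that closeness into closeness of mutual informations, and data processing yields $\mutInf{\hrvt y}{f(\rvt x)} \ge \mutInf{\wt\psi(\hrvt y)}{f(\rvt x)}$, which with a sufficiently small $\delta$ delivers the claimed bound.

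With $h$ in hand, the remainder is routine. I apply \cref{pro:outer} with $\hrv y$ in the role of the finite-alphabet random variable and the original $\rv y$ in the role of the arbitrary-alphabet auxiliary $\rv z$, using tolerance $\eps/2$. This produces $\rv u$ satisfying $\rv y \hrv y \mkv \rv x \mkv \rv u$, $\mutInf{\rv x}{\rv u} \le \tfrac{1}{n}\log\card{\MMM} \le R$, and $\mutInf{\hrv y}{\rv u} \ge \tfrac{1}{n}\mutInf{\hrvt y}{f(\rvt x)} - \eps/2 \ge S - \eps$. The joint Markov chain implies in particular $\rv y \mkv \rv x \mkv \rv u$, and since $\hrv y = h(\rv y)$ trivially makes $\rv u \mkv \rv y \mkv \hrv y$ hold, \scref{lem:properties}{itm:data-processing} gives $\mutInf{\rv y}{\rv u} \ge \mutInf{\hrv y}{\rv u} \ge S - \eps$. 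Thus $(S-\eps, R) \in \RRRib$, and letting $\eps \downarrow 0$ gives $(S,R) \in \ol\RRRib$.

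The main obstacle, as foreshadowed, is the construction of $h$: an arbitrary finite-range discretization of $\rvt y$ need not factor as $h^n$ for any per-letter $h$, so a product-space approximation by rectangles combined with a continuity argument is unavoidable. Everything else is bookkeeping around \cref{pro:outer} and the data-processing inequality.
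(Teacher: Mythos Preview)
Your plan is correct and follows essentially the same route as the paper: quantize $\rv y$ per letter, apply \cref{pro:outer} with the original $\rv y$ in the $\rv z$-slot, and finish by data processing. The only difference is that the paper obtains the per-letter quantizer $\fkt[y]$ in one line by invoking \cite[Lem.~5.2.2]{Gray1990Entropy} with the algebra generated by rectangles, whereas you reconstruct that fact explicitly via the same semiring-approximation trick used inside the proof of \cref{pro:outer}; both arrive at the same $\hrv y = h(\rv y)$ and the remainder is identical.
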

\begin{proof}
  Assume $(S,R) \in \RRR$ and choose $n \in \NN$ and $f$, satisfying $\frac{1}{n} \log\card{\MMM} \le R$ and \cref{eq:achievable}. Choose any $\eps > 0$ and find $\fkt[y]$ such that%
  \begin{align}
    \label{eq:outer-proof:1}
    \mutInf[big]{\fkt[y]^n(\rvt y)}{f(\rvt x)} \ge \mutInf[big]{\rvt y}{f(\rvt x)} - \eps \RL[\cref{eq:achievable}]\ge nS - \eps .
  \end{align}
  This is possible by applying \cite[Lem. 5.2.2]{Gray1990Entropy} with the algebra that is generated by the rectangles (\cf the paragraph above \cite[Lem.~5.5.1]{Gray1990Entropy}).
    We apply \cref{pro:outer}, substituting $\fkt[y](\rv y) \to \rv y$ and $\rv y \to \rv z$.
  For arbitrary $\eps > 0$, \cref{pro:outer} provides $\rv u$ with $\rv y \fkt[y](\rv y)  \mkv \rv x \mkv \rv u$ (\ie, $\rv y \mkv \rv x \mkv \rv u$) and
  \begin{align}
    \mutInf[normal]{\rv x}{\rv u} &\le \frac{1}{n} \log\card{\MMM} \le R \\
    \mutInf[normal]{\rv y}{\rv u} &\ge \mutInf[normal]{\fkt[y](\rv y)}{\rv u} \\
                                  &\RL[\cref{eq:outer:yu-cond}]\ge \frac{1}{n} \mutInf[big]{\fkt[y]^n(\rvt y)}{f(\rvt x)} - \eps 
                                  \RL[\cref{eq:outer-proof:1}] \ge S - 2\eps .
  \end{align}
  Hence, $(S - 2\eps, R) \in \RRRib$ and consequently $(S,R) \in \ol\RRRib$.
\end{proof}

\appendix

\section{Proof}
\label{sec:proof}

\subsection{$\hrv x$ is $\sigma(\rv x)$-measurable}
\label{sec:x-measurablility}

For $u \in \ps[u]$ consider the $\sigma(\rv x)$-measurable function $h_u \defas \kappa_{\rv u|\rv x}(\wc; u)$ on $[0,1]$. We obtain the vector valued function $h \defas (h_u)_{u \in \ps[u]}$ on $[0,1]^{\card{\ps[u]}}$. This function $h$ is $\sigma(\rv x)$-measurable as every component is $\sigma(\rv x)$-measurable.
Thus, we have $\hrv x^{-1}(i) = h^{-1}(P_i) \in \sigma(\rv x)$.

\begin{vlong}
\subsection{Distribution of $\wrv u$ and Conditional Independence}
\label{apx:cond-independence}

We will first show that $\mu_{\rv y \rv x \rv u \wrv u_\III}$-\Ae
\begin{align}
  \label{eq:kappa-is-nu}
  \kappa_{\wrv u|\hrv x} = \kappa_{\wrv u|\rv x} = \nu_{\hrv x} .
\end{align}
Clearly, $\nu_{\hrv x}$ is a probability measure everywhere. Fixing $u \in \ps[u]$, we need that $\nu_{\hrv x}(u)$ is $\sigma(\hrv x)$-measurable, which is shown by the factorization lemma \cite[Corollary~1.97]{Klenke2013Probability}, when writing $\nu_{\hrv x}(u) = \nu_{(\wc)}(u) \circ \hrv x$. Also, this proves $\sigma(\rv x)$-measurability as $\hrv x$ is $\sigma(\rv x)$-measurable, \ie, $\sigma(\hrv x) \subseteq \sigma(\rv x)$.
It remains to show the defining property of conditional probability, \scref{def:cond-expectation}{itm:cond-expectation:tower}. Choosing $B \in \sigma(\rv x)$ and $u \in \ps[u]$, we need to show that
\begin{align}
  \Exp{\ind{B}{} \nu_{\hrv x}(u)} = \Exp{\ind{B}{} \ind{\{u\}}{\wrv u}} . \label{eq:proof-independence-exp-equal}
\end{align}
The statement for $B \in \sigma(\hrv x)$ then follows by $\sigma(\hrv x) \subseteq \sigma(\rv x)$, \ie, the $\sigma(\rv x)$-measurability of $\hrv x$.
We prove \cref{eq:proof-independence-exp-equal} by
\begin{align}
  \Exp{\ind{B}{} \nu_{\hrv x}(u)} &= \sum_{i \in \III} \Exp{\ind{i}{\hrv x} \ind{B}{} \nu_i(u)} \\
                                  &= \sum_{i \in \III} \nu_i(u) \Exp{\ind{i}{\hrv x} \ind{B}{} } \\
                                  &= \sum_{i \in \III} \Exp{\ind{u}{\wrv u_i}} \Exp{\ind{i}{\hrv x} \ind{B}{} } \\
                                  &= \sum_{i \in \III} \Exp{\ind{i}{\hrv x} \ind{B}{} \ind{u}{\wrv u_i}} \label{eq:markov-proof-fubini} \\
                                  &= \sum_{i \in \III} \Exp{\ind{i}{\hrv x} \ind{B}{} \ind{u}{\wrv u}} \\
                                  &= \Exp{\ind{B}{} \ind{u}{\wrv u}} , \label{eq:markov-proof-eq1}
\end{align}
where we used Fubini's theorem \cite[Thm.~14.16]{Klenke2013Probability} in \cref{eq:markov-proof-fubini}.

To prove $\condMutInf[normal]{\rv y}{\wrv u}{\rv x} = 0$, we need to show that for every $y \in \ps[y]$, $u \in \ps[u]$, and $B \in \sigma(\rv x)$, we have
\begin{align}
  \int \ind{B}{} \kappa_{\rv y|\rv x}(\wc; y) \nu_{\hrv x}(u) \;d\mu_{\rv y \rv x \rv u} &=
  \int \ind{B}{} \ind{u}{\wrv u} \ind{y}{\rv y} \;d\mu_{\rv y \rv x \rv u \wrv u_\III}
\end{align}
and by integrating, we indeed obtain
\begin{align}
  &\int \ind{B}{} \kappa_{\rv y|\rv x}(\wc; y) \nu_{\hrv x}(u) \;d\mu_{\rv y \rv x \rv u} \\
  &= \sum_{i \in \III} \int \ind{B}{}\ind{i}{\hrv x} \kappa_{\rv y|\rv x}(\wc; y) \nu_i(u) \;d\mu_{\rv y \rv x \rv u} \\
  &= \sum_{i \in \III} \nu_i(u) \int \ind{B}{}\ind{i}{\hrv x} \kappa_{\rv y|\rv x}(\wc; y)  \;d\mu_{\rv y \rv x \rv u} \\
  &= \sum_{i \in \III} \int \ind{u}{\wrv u_i} \;d\mu_{\wrv u_\III} \int \ind{B}{}\ind{i}{\hrv x} \ind{y}{\rv y}  \;d\mu_{\rv y \rv x \rv u} \label{eq:markov-proof:cond-prob} \\
  &= \sum_{i \in \III} \int \ind{B}{} \ind{u}{\wrv u_i} \ind{i}{\hrv x} \ind{y}{\rv y}  \;d\mu_{\rv y \rv x \rv u \wrv u_\III} \label{eq:markov-proof-fubini2}\\
  &= \sum_{i \in \III} \int \ind{B}{} \ind{u}{\wrv u} \ind{i}{\hrv x} \ind{y}{\rv y}  \;d\mu_{\rv y \rv x \rv u \wrv u_\III}  \\
  &= \int \ind{B}{} \ind{u}{\wrv u} \ind{y}{\rv y}  \;d\mu_{\rv y \rv x \rv u \wrv u_\III} ,
  \end{align}
where we used \scref{def:cond-expectation}{itm:cond-expectation:tower} in \cref{eq:markov-proof:cond-prob} and Fubini's theorem \cite[Thm.~14.16]{Klenke2013Probability} in \cref{eq:markov-proof-fubini2}. By replacing $\kappa_{\rv y|\rv x}$ with $\kappa_{\rv y|\hrv x}$ and using $B \in \sigma(\hrv x)$, the same argument can be used to show $\condMutInf[normal]{\rv y}{\wrv u}{\hrv x} = 0$.
\end{vlong}

\section*{Acknowledgment}
The authors would like to thank Michael Meidlinger for providing inspiration for this work.

\bibliographystyle{myIEEEtran}
\bibliography{IEEEabrv,literature}

\end{document}